\newtheorem{proposition}{Proposition}
\newtheorem{corollary}{Corollary}
\theoremstyle{definition}
\newtheorem{example}{Example}
\newcommand{\bra}[1]{\langle #1|}
\newcommand{\ket}[1]{| #1 \rangle }
\begin{document}

\title{Ultimate entanglement robustness of two-qubit states against general local noises}

\author{Sergey N. Filippov}

\affiliation{Moscow Institute of Physics and Technology,
Institutskii Per. 9, Dolgoprudny, Moscow Region 141700, Russia}

\affiliation{Institute of Physics and Technology, Russian Academy
of Sciences, Nakhimovskii Pr. 34, Moscow 117218, Russia}

\author{Vladimir V. Frizen}

\affiliation{Moscow Institute of Physics and Technology,
Institutskii Per. 9, Dolgoprudny, Moscow Region 141700, Russia}

\author{Daria V. Kolobova}

\affiliation{Moscow Institute of Physics and Technology,
Institutskii Per. 9, Dolgoprudny, Moscow Region 141700, Russia}

\begin{abstract}
We study the problem of optimal preparation of a bipartite
entangled state, which remains entangled the longest time under
action of local qubit noises. We show that for unital noises such
a state is always maximally entangled, whereas for nonunital
noises, it is not. We develop a decomposition technique relating
nonunital and unital qubit channels, based on which we find the
explicit form of the ultimately robust state for general local
noises. We illustrate our findings by amplitude damping processes
at finite temperature, for which the ultimately robust state
remains entangled up to two times longer than conventional
maximally entangled states.
\end{abstract}

\maketitle

\section{Introduction}
\label{section:introduction}

Quantum communication is one of the most developed subfields in
the practical realization of quantum information
protocols~\cite{gisin-thew-2007}. Dense
coding~\cite{bennett-1992}, quantum
teleportation~\cite{bennett-1993}, remote state
preparation~\cite{bennett-2005}, and some cryptographic
schemes~\cite{ekert-1991,deutsch-1996,acin-2007} are based on the
phenomenon of entanglement. Entanglement is also widely used in
other quantum information applications~\cite{horodecki-2009}. When
two laboratories $A$ and $B$ are taken into account, by entangled
state we understand a density operator $\varrho^{AB}$ (unit trace
positive-semidefinite operator acting on some Hilbert space
$\mathcal{H}$), which does not belong to a closure of separable
states of the form $\varrho^{AB} = \sum_k p_k \varrho_k^A \otimes
\varrho_k^B$, $p_k \geqslant 0$, $\sum_k p_k =
1$~\cite{werner-1989}. Entangled states cannot be created by local
operations and classical communication from factorized
states~\cite{nielsen-1999}, so entanglement between
non-interacting laboratories $A$ and $B$ can only be created via
sending parts of a locally prepared initial entangled state
$\varrho_{\rm in}^{AB}$ to $A$ and $B$, respectively (transmission
of an entangled state can also be a stage in a more involved
process such as entanglement swapping~\cite{zukowski-1993}). Since
$A$ and $B$ are supposed to be far apart, transmission of the
entangled state is carried out by means of local quantum channels
$\Phi^{AB} = \Phi_1^A \otimes \Phi_2^B$; see Fig.~\ref{figure1}.
Quantum channel $\Phi:\mathcal{B}(\mathcal{H}) \mapsto
\mathcal{B}(\mathcal{H})$ is a completely positive
trace-preserving map that describes the result of quantum system
transformation due to unavoidable interaction with environment
(quantum
noise)~\cite{nielsen-2000,breuer-petruccione-2002,holevo-2012}.
The longer the quantum channels between the entanglement source
and laboratories $A$, $B$, the noisier and less entangled becomes
the output state, $\varrho_{\rm out}^{AB} = (\Phi_1^A \otimes
\Phi_2^B)[\varrho_{\rm
in}^{AB}]$~\cite{yu-2004,bellomo-2008,yu-eberly-2009,wang-2013,shaham-2015,aolita-2015}.
The length of the quantum channels can be included in the above
description by time $t$ quantifying the duration of the
system-environment interaction: $\varrho^{AB}(t) = \Phi_1^A(t)
\otimes \Phi_2^B(t) [\varrho_{\rm in}^{AB}]$, with $\Phi_1^A(0)$
and $\Phi_2^B(0)$ being identity transformations (${\rm Id}$).
Preservation of entanglement of the state $\varrho^{AB}(t)$ is the
primary goal for implementing entanglement-based protocols. In
fact, if $A$ and $B$ are both qubit systems and $\varrho^{AB}(t)$
is entangled, then by sending the same state $\varrho_{\rm
in}^{AB}$ through a quantum channel $\Phi_1^A(t) \otimes
\Phi_2^B(t)$ many times, one can distill maximally entangled
states $\varrho_+ = \ket{\psi_+}\bra{\psi_+}$, $\ket{\psi_+} =
\frac{1}{\sqrt{2}} (\ket{00}+\ket{11})$ that are useful in
entanglement-based applications~\cite{horodecki-prl-1997}. Given
quantum noises $\Phi_1^A(t)$ and $\Phi_2^B(t)$, the entanglement
lifetime of the state $\varrho_{\rm in}^{AB}$ is defined as the
minimal time $\tau$ such that $\varrho^{AB}(t)$ is separable for
all $t \geqslant \tau$. In other words, the entanglement lifetime
(also referred to as disentangling time) is the time of
entanglement sudden death~\cite{yu-2004}. The maximal possible
entanglement lifetime $\widetilde{\tau} = \max_{\varrho_{\rm
in}^{AB}} \tau$ provides the fundamental restriction on the length
of quantum channels to $A$ and $B$. The state
$\widetilde{\varrho}_{\rm in}^{AB}$, which maximizes entanglement
lifetime, exhibits the ultimate entanglement robustness to local
noises $\Phi_1^A(t) \otimes \Phi_2^B(t)$. If
$\widetilde{\varrho}_{\rm in}^{AB}$ is the most robust to the loss
of entanglement with respect to the dynamical map $\Phi_1^A(t)
\otimes \Phi_2^B(t)$, then separability of $\Phi_1^A(t) \otimes
\Phi_2^B(t)[\widetilde{\varrho}_{\rm in}^{AB}]$ implies
separability of $\Phi_1^A(t) \otimes \Phi_2^B(t)[\varrho_{\rm
in}^{AB}]$ for all input states $\varrho_{\rm in}^{AB}$. Note that
the output of the channel $\Phi_1^A(\widetilde{\tau}) \otimes
\Phi_2^B(\widetilde{\tau})$ is separable for all possible input
states, i.e., such a channel is entanglement
annihilating~\cite{moravcikova-ziman-2010,filippov-rybar-ziman-2012,filippov-ziman-2013,filippov-ziman-2-2013,filippov-2014,filippov-2015}.

\begin{figure}[b]
\includegraphics[width=8.5cm]{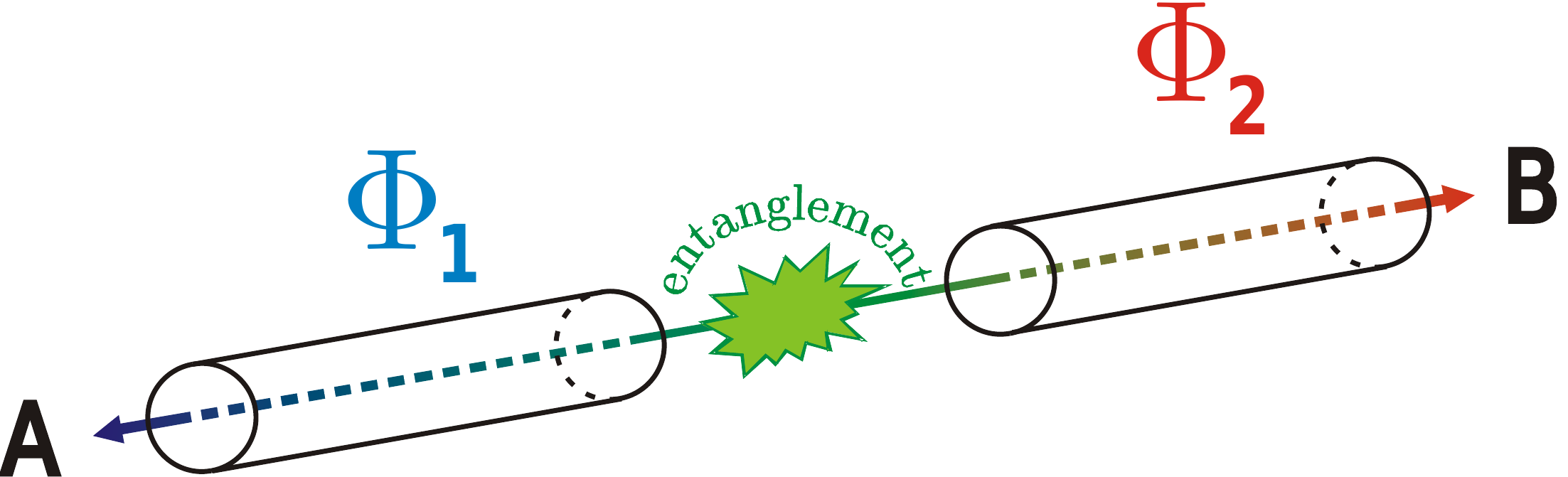}
\caption{\label{figure1} Transmission of entangled state through
local quantum channels.}
\end{figure}

Despite the fact that entanglement of a two-qubit system can be
readily and precisely verified via the Peres-Horodecki
criterion~\cite{peres-1996,horodecki-1996} or
concurrence~\cite{hill-wootters-1997,wootters-1998}, it is not
that easy to resolve the maximin problem of entanglement lifetime
$\widetilde{\tau}$ even for a simple semigroup dynamics
$\Phi_1^A(t) \otimes \Phi_2^B(t) = e^{\mathcal{L}_1^A t} \otimes
e^{\mathcal{L}_2^B t}$ describing generalized amplitude damping
processes~\cite{jakobczyk-2004,ali-2009}. It is also not known how
to find the optimal state $\widetilde{\varrho}_{\rm in}^{AB}$
analytically. There are three distinguished exceptions, however.
The first one is the case of one-sided noiseless evolution, when
$\Phi_1^A(t) \equiv {\rm Id}$, i.e. one part of the entangled
system is perfectly preserved; then the maximally entangled state
$\varrho_+$ has ultimate
robustness~\cite{horodecki-2003,konrad-2008,xu-2009}. The second
exception is the case of local depolarizing noises, with
$\varrho_+$ being ultimately robust~\cite{tiersch-2009}. The third
exception is the case of local unital~\footnote{A linear map
$\Upsilon: \mathcal{B}(\mathcal{H}) \mapsto
\mathcal{B}(\mathcal{H})$ is called unital if it preserves
identity operator, i.e. $\Upsilon[I] = I$.} two-qubit dynamical
maps $\Upsilon(t) \otimes \Upsilon(t)$, for which the maximally
entangled state $\varrho_+$ is the most robust to the loss of
entanglement too~\cite{filippov-rybar-ziman-2012}. In this paper,
we extend these results to the case of general local unital
channels $\Upsilon_1^A(t) \otimes \Upsilon_2^B(t)$ and prove that
the maximally entangled state $\varrho_+$ is optimal for the
transmission of entanglement through such channels. It is tempting
to conclude that the maximally entangled state $\varrho_+$
exhibits ultimate robustness to general local two-qubit noises
$\Phi_1^A(t) \otimes \Phi_2^B(t)$; however, this is not
true~\cite{wang-2015,ziman-buzek-2005,ziman-buzek-2007} and we
show that explicitly in this paper. Moreover, we analytically find
the initial two-qubit state $\widetilde{\varrho}$, which is the
most robust to a given nonunital local two-qubit dynamical map
$\Phi_1(t) \otimes \Phi_2(t)$. The use of the optimal initial
state for entanglement distribution enables essential extension of
the length of communication lines, which we demonstrate by
examples of generalized amplitude damping processes.

The paper is organized as follows. In Sec.~\ref{section:unital},
we consider two-qubit local unital dynamical maps $\Upsilon(t)
\otimes \Upsilon'(t)$ and prove that the ultimately robust state
is necessarily maximally entangled. We also find a criterion to
check if the map $\Upsilon(t) \otimes \Upsilon'(t)$ is
entanglement annihilating, based on which one can
straightforwardly calculate the maximal entanglement lifetime. In
Sec.~\ref{section:nonunital-robustness}, we show how the results
for unital dynamical maps are related with those for nonunital
ones, provided a special decomposition is known. We find the
explicit form of such a decomposition of nonunital channels in
Sec.~\ref{section:nonunital-decomposition}. In
Sec.~\ref{section:nonunital-gad}, we apply the developed theory to
nonunital channels describing the process of amplitude damping due
to qubit interaction with the environment of finite temperature.
In Sec.~\ref{section:conclusions}, brief conclusions are given.

\section{Unital channels}
\label{section:unital}

A unital qubit channel $\Upsilon$ is necessarily random
unitary~\cite{landau-streater-1993} and with a suitable choice of
input and output bases, can be represented in the
form~\cite{ruskai-2002}
\begin{equation}
\label{Upsilon} \Upsilon[X] = \frac{1}{2} {\rm tr}[X] I +
\frac{1}{2} \sum_{i=1}^3 \lambda_i {\rm tr}[\sigma_i X] \sigma_i,
\end{equation}

\noindent where $\sigma_1,\sigma_2,\sigma_3$ is a conventional set
of Pauli operators such that $\sigma_3\ket{0} = \ket{0}$ and
$\sigma_3\ket{1} = - \ket{1}$. The map \eqref{Upsilon} is known to
be positive if $-1 \leqslant \lambda_1,\lambda_2,\lambda_3
\leqslant 1$, completely positive if $1 \pm \lambda_3 \geqslant |
\lambda_1 \pm \lambda_2 |$, and entanglement breaking if
$|\lambda_1| + |\lambda_2| + |\lambda_3| \leqslant
1$~\cite{ruskai-2003}. We will associate every map $\Upsilon$ with
the corresponding vector $\boldsymbol{\lambda} =
(\lambda_1,\lambda_2,\lambda_3)^{\top}$.

Matrix representation $M_{ij}(\Upsilon) = \frac{1}{2}{\rm tr}
\big[ \sigma_i \Upsilon[\sigma_j] \big]$, $i,j = 0, \ldots,3$,
$\sigma_0 = I$, of the map~\eqref{Upsilon} reads
\begin{equation}
M(\Upsilon) = \left(%
\begin{array}{cccc}
  1 & 0 & 0 & 0 \\
  0 & \lambda_1 & 0 & 0 \\
  0 & 0 & \lambda_2 & 0 \\
  0 & 0 & 0 & \lambda_3 \\
\end{array}%
\right) = {\rm diag}(1,\boldsymbol{\lambda}^{\top}).
\end{equation}

A local unital two-qubit map $\Upsilon \otimes \Upsilon$ composed
of identical unital maps $\Upsilon$ is known to be entanglement
annihilating if $\lambda_1^2 + \lambda_2^2 + \lambda_3^2 \leqslant
1$~\cite{filippov-rybar-ziman-2012}, with the maximally entangled
state $\varrho_+$ having the longest entanglement lifetime. Some
sufficient and (separately) necessary conditions for entanglement
annihilation of the general local unital two-qubit map $\Upsilon
\otimes \Upsilon'$ are listed in
Ref.~\cite{filippov-rybar-ziman-2012}. We fill the gap in analysis
of such maps and provide a criterion of entanglement annihilation.

\begin{proposition}
\label{proposition:unital} Suppose $\Upsilon$ and $\Upsilon'$ are
positive qubit maps. Then the map $\Upsilon \otimes \Upsilon'$ is
positive and entanglement annihilating if and only if
$|\lambda_i|,|\lambda_i'| \leqslant 1$, $i=1,2,3$ and
$\boldsymbol{\lambda} P \boldsymbol{\lambda}' \leqslant 1$ for all
signed permutation matrices $P$.
\end{proposition}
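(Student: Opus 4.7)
The plan is to combine the Peres--Horodecki separability criterion with a characterization of positivity of the tensor product of two unital qubit channels. Since for two qubits separability is equivalent to positivity of the partial transpose, $\Upsilon \otimes \Upsilon'$ is entanglement annihilating if and only if both $\Upsilon \otimes \Upsilon'$ and $(T \circ \Upsilon) \otimes \Upsilon'$ are positive maps, where $T$ denotes the qubit transpose. In Bloch coordinates $T$ acts by $\boldsymbol{\lambda} \mapsto D\boldsymbol{\lambda}$ with $D = \mathrm{diag}(1,-1,1)$ and $\det D = -1$, so these two positivity constraints will yield complementary halves of the full set of signed-permutation inequalities.

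The core step is an auxiliary lemma: $\Upsilon \otimes \Upsilon'$ is positive if and only if $|\lambda_i|,|\lambda'_i| \leq 1$ and $\boldsymbol{\lambda}^{\top} P \boldsymbol{\lambda}' \leq 1$ for every signed permutation matrix $P$ with $\det P = -1$. For necessity, I would evaluate $\Upsilon \otimes \Upsilon'$ on pure inputs of the form $(I \otimes W)\ket{\psi_+}$, letting $W$ range over Clifford unitaries whose Bloch-sphere rotations $R_W$ realize the $24$ signed permutations in $SO(3)$ (the binary octahedral group); the output is a state with maximally mixed marginals and correlation matrix $\Lambda O \Lambda'$ where $\Lambda = \mathrm{diag}(\boldsymbol{\lambda})$ and $O = D R_W^{\top}$ ranges over all $24$ signed permutations of determinant $-1$, and the four eigenvalue-nonnegativity conditions (extracted by diagonalizing the output in the Bell basis after a local unitary reduction) become exactly the inequalities $\boldsymbol{\lambda}^{\top}P\boldsymbol{\lambda}' \leq 1$ with the corresponding $P$. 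For sufficiency, an arbitrary pure input is Schmidt-decomposed as $(U_1 \otimes U_2)\ket{\psi_p}$ with $\ket{\psi_p} = \sqrt{p}\ket{00} + \sqrt{1-p}\ket{11}$; after absorbing the local unitaries into the channels, the output acquires the explicit block structure dictated by the rank-two Schmidt form, and a Ky Fan / rearrangement argument shows that the supremum of the positivity-violating eigenvalue combinations over the compact parameter space $SO(3) \times SO(3) \times [0,1]$ is attained at one of the signed-permutation configurations above.

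Combining the two positivity conditions, positivity of $\Upsilon \otimes \Upsilon'$ contributes the $24$ inequalities with $\det P = -1$, while positivity of $(T\Upsilon) \otimes \Upsilon'$, upon the substitution $\boldsymbol{\lambda} \mapsto D \boldsymbol{\lambda}$, becomes $\boldsymbol{\lambda}^{\top}(DP) \boldsymbol{\lambda}' \leq 1$ for $\det P = -1$, i.e., the $24$ remaining inequalities with $\det(DP) = +1$; together they exhaust all $48$ signed permutations of $3 \times 3$ matrices and the proposition follows. The main obstacle is the sufficiency direction of the auxiliary lemma: the necessary inequalities drop out of a direct Clifford-rotated Bell-state calculation, but sufficiency requires controlling the interplay between the correlation matrix and the single-qubit Bloch vectors that become nonzero for $p \ne \tfrac{1}{2}$, and showing via rearrangement that these joint conditions are never tighter than the maximally entangled case.
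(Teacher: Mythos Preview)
Your high-level reduction is correct and elegant: since two-qubit separability is equivalent to PPT, $\Upsilon\otimes\Upsilon'$ is positive and entanglement annihilating iff both $\Upsilon\otimes\Upsilon'$ and $(T\circ\Upsilon)\otimes\Upsilon'$ are positive, and composing with $T$ indeed swaps the two halves of the signed-permutation inequalities. This is genuinely different from the paper's route, which never splits the problem into two positivity tests but instead pushes everything to one side via $\Upsilon\otimes\Upsilon'[\ket{\psi}\bra{\psi}]=(\Upsilon\circ\Phi_C\circ\Upsilon')\otimes\mathrm{Id}[\ket{\psi_+}\bra{\psi_+}]$, applies the reduction criterion, and rewrites the resulting condition as a Hadamard-product inequality analyzed through an octahedron bound on columns of orthogonal matrices.

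However, your auxiliary lemma is misstated. You claim that positivity of $\Upsilon\otimes\Upsilon'$ is equivalent to $\boldsymbol{\lambda}^{\top}P\boldsymbol{\lambda}'\leq 1$ for all signed permutations $P$ with $\det P=-1$. Test this with $\Upsilon'=\mathrm{Id}$, i.e.\ $\boldsymbol{\lambda}'=(1,1,1)^{\top}$: then positivity of $\Upsilon\otimes\mathrm{Id}$ is exactly complete positivity of $\Upsilon$, which is the tetrahedron $1\pm\lambda_3\geq|\lambda_1\pm\lambda_2|$. But among the $24$ signed permutations with $\det P=-1$ there are transpositions, e.g.\ the $(12)$-swap, for which $P(1,1,1)^{\top}=(1,1,1)^{\top}$ and your inequality becomes $\lambda_1+\lambda_2+\lambda_3\leq 1$; the identity channel $\boldsymbol{\lambda}=(1,1,1)$ violates this yet $\mathrm{Id}\otimes\mathrm{Id}$ is certainly positive. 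The error is in your necessity computation: when the input correlation matrix is $\Lambda O\Lambda'$ with $O$ a $\det=-1$ signed permutation whose underlying unsigned permutation $Q$ is \emph{odd}, the four eigenvalue conditions produce $P$'s of determinant $+1$, not $-1$. The correct set of $24$ necessary conditions from Clifford-rotated Bell inputs is $\boldsymbol{\lambda}^{\top}\mathrm{diag}(\eta)\,Q\,\boldsymbol{\lambda}'\leq 1$ for all permutations $Q$ and all $\eta\in\{\pm1\}^3$ with $\eta_1\eta_2\eta_3=-1$; these split $12$/$12$ across $\det P=\pm1$, not $24$/$0$. With this correction your $T$-composition argument still closes up to the full $48$, so the strategy survives.

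The sufficiency direction of the (corrected) lemma remains the real gap, as you acknowledge. Your ``Ky Fan / rearrangement'' sketch does not obviously control the cross-terms coming from the local Bloch vectors when $p\neq\tfrac12$, and it is not clear that the extremum over $SO(3)\times SO(3)\times[0,1]$ is attained at a signed-permutation corner without a more detailed argument. For comparison, the paper faces the analogous difficulty at the same structural point and ultimately closes it by a numerical verification rather than a clean analytic bound; so while your approach is conceptually cleaner, it does not yet improve on the paper where it matters most.
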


\begin{proof}
\textit{Sufficiency}. Due to a convex structure of separable
states, a map $\Upsilon \otimes \Upsilon'$ is entanglement
annihilating if and only if $\Upsilon \otimes
\Upsilon'[\ket{\psi}\bra{\psi}]$ is separable for all pure states
$\ket{\psi}$. On the other hand, any pure two-qubit state
$\ket{\psi}$ can be represented as a linear combination of
Bell-like states $\ket{\varphi_i} = \sigma_i \otimes I
\ket{\psi_+}$, $i=0,\ldots,3$:
\begin{equation}
\ket{\psi} = \sum_{i=0}^3 c_i \ket{\varphi_i} = C \otimes I
\ket{\psi_+},
\end{equation}

\noindent where $C = \sum_{i=0}^3 c_i \sigma_i$. Denote $\Phi_C[X]
= C X C^{\dag}$; then the density operator of any two-qubit pure
state takes the form
\begin{equation}
\label{any-pure-state-operator} \ket{\psi}\bra{\psi} = \Phi_C
\otimes {\rm Id} [\ket{\psi_+}\bra{\psi_+}].
\end{equation}

Kraus representation of the map $\Upsilon'$ is well
known~\cite{ruskai-2002} and reads $\Upsilon'[X] = \sum_{j=0}^3
q_j' \sigma_j X \sigma_j$, where real parameters $\{q_j'\}$ are
uniquely expressed through parameters $\{\lambda_j'\}$. Since $I
\otimes \sigma_j' \ket{\psi_+} = (\sigma_j')^{\top} \otimes I
\ket{\psi_+}$, we get
\begin{eqnarray}
\label{Upsilon-one-sided} {\rm Id} \otimes \Upsilon'
[\ket{\psi_+}\bra{\psi_+}] & = & \sum_{j=0}^3 q_j'
(\sigma_j')^{\top} \otimes I \ket{\psi_+}\bra{\psi_+}
(\sigma_j')^{\top} \otimes I \nonumber\\
& = & \Upsilon' \otimes {\rm Id} [\ket{\psi_+}\bra{\psi_+}],
\end{eqnarray}

\noindent where we have taken into account that
$(\sigma_2')^{\top}=-\sigma_2'$ and $(\sigma_j')^{\top}=\sigma_j'$
if $j=0,1,3$. Combining \eqref{any-pure-state-operator} and
\eqref{Upsilon-one-sided}, we can express the action of the map
$\Upsilon \otimes \Upsilon'$ on any pure state as follows:
\begin{eqnarray}
\Upsilon \otimes \Upsilon' [\ket{\psi}\bra{\psi}] & = & (\Upsilon
\otimes \Upsilon') \circ (\Phi_C \otimes {\rm Id})
[\ket{\psi_+}\bra{\psi_+}] \nonumber\\
 & = & (\Upsilon \circ \Phi_C
\otimes {\rm Id}) \circ ({\rm Id} \otimes \Upsilon') [\ket{\psi_+}\bra{\psi_+}] \nonumber\\
& = & \Upsilon \circ \Phi_C \circ \Upsilon' \otimes {\rm Id}
[\ket{\psi_+}\bra{\psi_+}]. \label{output}
\end{eqnarray}

Therefore, the map $\Upsilon \otimes \Upsilon'$ is entanglement
annihilating if and only if the output state \eqref{output} is
separable for all matrices $C$. The necessary and sufficient
criterion of separability of two-qubit states provides the
reduction criterion~\cite{horodecki-1999}, which states that the
two-qubit state $\varrho$ is separable if and only if ${\cal R}
\otimes {\rm Id}[\varrho] \geqslant 0$, where the action of qubit
map ${\cal R}$ reads ${\cal R}[X] = {\rm tr}[X] I - X$. Thus, the
state \eqref{output} is separable if and only if ${\cal R} \circ
\Upsilon \circ \Phi_C \circ \Upsilon' \otimes {\rm Id}
[\ket{\psi_+}\bra{\psi_+}] \geqslant 0$, or equivalently
\begin{equation}
\bra{\chi} \left( {\cal R} \circ \Upsilon \circ \Phi_C \circ
\Upsilon' \otimes {\rm Id} [\ket{\psi_+}\bra{\psi_+}] \right)
\ket{\chi} \geqslant 0
\end{equation}

\noindent for all two-qubit states $\ket{\chi}$. Similarly to
Eq.~\eqref{any-pure-state-operator}, we represent $\ket{\chi} = D
\otimes I \ket{\psi_+}$ and conclude that $\Upsilon \otimes
\Upsilon'$ is entanglement annihilating if and only if
\begin{equation}
\label{CD} \bra{\psi_+} \left( \Phi_{D^{\dag}} \circ {\cal R}
\circ \Upsilon \circ \Phi_C \circ \Upsilon' \otimes {\rm Id}
[\ket{\psi_+}\bra{\psi_+}] \right) \ket{\psi_+} \geqslant 0
\end{equation}

\noindent for all matrices $C$ and $D$. Recalling $\ket{\psi_+} =
\frac{1}{\sqrt{2}}\sum_{k=0}^1 \ket{k} \otimes \ket{k}$,
Eq.~\eqref{CD} is equivalent to
\begin{equation}
\label{DRC} \sum_{k,l=0}^1 \bra{k}  \Big( \Phi_{D^{\dag}} \circ
{\cal R} \circ \Upsilon \circ \Phi_C \circ \Upsilon'
[\ket{k}\bra{l}] \Big)  \ket{l} \geqslant 0.
\end{equation}

The basis of matrix units $E_{kl} = \ket{k}\bra{l}$ is orthonormal
in the sense of Hilbert-Schmidt inner product $(X,Y)={\rm
tr}[X^{\dag} Y]$. So is the basis of operators $\{
\frac{1}{\sqrt{2}}\sigma_j \}_{j=0}^3$, and hence $E_{kl} =
\sum_{j=0}^3 W_{kl,j} \frac{1}{\sqrt{2}}\sigma_j$ and
$\sum_{k,l=0}^1 W_{kl,i}^{\ast} W_{kl,j} = \delta_{ij}$.
Eq.~\eqref{DRC} takes the form
\begin{eqnarray}
0 & \leqslant & \sum_{k,l=0}^1 {\rm tr} \left[ E_{kl}^{\dag}
\Phi_{D^{\dag}} \circ {\cal R} \circ \Upsilon \circ \Phi_C \circ
\Upsilon'
[E_{kl}] \right] \nonumber\\
& = & \frac{1}{2} \sum_{i,j=0}^3 \sum_{k,l=0}^1 W_{kl,i}^{\ast}
W_{kl,j} {\rm tr} \left[ \sigma_{i}^{\dag} \Phi_{D^{\dag}} \circ
{\cal R} \circ
\Upsilon \circ \Phi_C \circ \Upsilon' [\sigma_j] \right] \nonumber\\
& = & \frac{1}{2} \sum_{i=0}^3 {\rm tr} \left[ \sigma_{i}
\Phi_{D^{\dag}} \circ {\cal R} \circ \Upsilon \circ \Phi_C \circ
\Upsilon' [\sigma_i] \right] \nonumber\\
& = & {\rm tr} \left[ M (\Phi_{D^{\dag}} \circ {\cal R} \circ
\Upsilon
\circ \Phi_C \circ \Upsilon') \right] \nonumber\\
& = &  {\rm tr} \left[ M (\Phi_{D^{\dag}}) M({\cal R})
M(\Upsilon) M(\Phi_C) M(\Upsilon') \right] \nonumber\\
& = & {\rm tr} \left[ M (\Phi_{D^{\dag}}) {\rm
diag}(1,-\boldsymbol{\lambda}^{\top}) M(\Phi_C) {\rm
diag}(1,{\boldsymbol{\lambda}'}^{\top}) \right] \nonumber\\
& = & (1,-\boldsymbol{\lambda}^{\top}) \, M
(\Phi_{D^{\dag}})^{\top} \!\! \ast M(\Phi_C)  \, \left(%
\begin{array}{c}
  1 \\
  \boldsymbol{\lambda}'
\end{array}%
\right) \nonumber\\
& = & (1,-\boldsymbol{\lambda}^{\top}) \, M
(\Phi_C) \ast M(\Phi_D) \, \left(%
\begin{array}{c}
  1 \\
  \boldsymbol{\lambda}'
\end{array}%
\right), \label{main-inequality}
\end{eqnarray}

\noindent where $\ast$ denotes the Hadamard pointwise product,
i.e. $(M \ast N)_{ij} = M_{ij}N_{ij}$.

To know matrix representations of maps $\Phi_C$ and $\Phi_D$, we
use singular-value decompositions,
\begin{eqnarray}
C &=& U_C \left(%
\begin{array}{cc}
  \sqrt{1+\sin\alpha_{C}} & 0 \\
  0 & \sqrt{1-\sin\alpha_{C}} \\
\end{array}%
\right) V_C, \\
D &=& U_D \left(%
\begin{array}{cc}
  \sqrt{1+\sin\alpha_{D}} & 0 \\
  0 & \sqrt{1-\sin\alpha_{D}} \\
\end{array}%
\right) V_D,
\end{eqnarray}

\noindent which explicitly take into account that the states
$\ket{\psi}$ and $\ket{\chi}$ are normalized, i.e. ${\rm
tr}[C^{\dag}C] = {\rm tr}[D^{\dag}D] = 2$. Here, $U_C$, $V_C$,
$U_D$, $V_D$ are unitary operators and $0 \leqslant \alpha_C,
\alpha_D \leqslant \frac{\pi}{2}$. Therefore,
\begin{eqnarray}
&& M(\Phi_C) = \left(%
\begin{array}{c|c}
  1 & {\bf 0}^{\top} \\
  \hline
  {\bf 0} & Q_{U_C} \\
\end{array}%
\right) \left(%
\begin{array}{c|c}
  1 & {\bf t}_C^{\top} \\
  \hline
  {\bf t}_C & T_C \\
\end{array}%
\right) \left(%
\begin{array}{c|c}
  1 & {\bf 0}^{\top} \\
  \hline
  {\bf 0} & Q_{V_C} \\
\end{array}%
\right), \qquad \label{MC} \\
&& M(\Phi_D) = \left(%
\begin{array}{c|c}
  1 & {\bf 0}^{\top} \\
  \hline
  {\bf 0} & Q_{U_D} \\
\end{array}%
\right) \left(%
\begin{array}{c|c}
  1 & {\bf t}_D^{\top} \\
  \hline
  {\bf t}_D & T_D \\
\end{array}%
\right) \left(%
\begin{array}{c|c}
  1 & {\bf 0}^{\top} \\
  \hline
  {\bf 0} & Q_{V_D} \\
\end{array}%
\right), \qquad \label{MD}
\end{eqnarray}

\noindent where $Q_U$ is a $3 \times 3$ orthogonal matrix
corresponding to channel $\Phi_U$, ${\bf 0}=(0,0,0)^{\top}$, ${\bf
t}_{C(D)} = (0,0,\sin\alpha_{C(D)})^{\top}$, and $T_{C(D)} = {\rm
diag}\left( \cos\alpha_{C(D)}, \cos\alpha_{C(D)}, 1 \right)$.

By ${\bf u}_{C(D)1}, {\bf u}_{C(D)2}, {\bf u}_{C(D)3}$, denote
three orthonormal columns of the matrix $Q_{U_{C(D)}}$ and, by
${\bf v}_{C(D)1}^{\top}, {\bf v}_{C(D)2}^{\top}, {\bf
v}_{C(D)3}^{\top}$, denote three orthonormal rows of the matrix
$Q_{V_{C(D)}}$. Introduce the vectors
\begin{equation}
{\bf u}_{kl} = {\bf u}_{Ck} \ast {\bf u}_{Dl}, \quad {\bf v}_{kl}
= {\bf v}_{Ck} \ast {\bf v}_{Dl}.
\end{equation}

\noindent Then the direct calculation of the Hadamard product $M
(\Phi_C) \ast M(\Phi_D)$ yields
\begin{equation}
M (\Phi_C) \ast M(\Phi_D) = \left(%
\begin{array}{c|c}
  1 &  \sin\alpha_C \sin\alpha_D {\bf v}_{33}^{\top} \\
  \hline
  \sin\alpha_C \sin\alpha_D {\bf u}_{33} & S \\
\end{array}%
\right),
\end{equation}

\noindent where
\begin{eqnarray}
&& S = {\bf u}_{33} {\bf v}_{33}^{\top} + \cos\alpha_C [ {\bf
u}_{13} {\bf v}_{13}^{\top} + {\bf u}_{23} {\bf v}_{23}^{\top} ] \nonumber\\
&& + \cos\alpha_D [ {\bf u}_{31} {\bf v}_{31}^{\top} + {\bf
u}_{32} {\bf
v}_{32}^{\top} ] \nonumber\\
&& + \cos\alpha_C \cos\alpha_D [ {\bf u}_{11} {\bf v}_{11}^{\top}
+ {\bf u}_{12} \ast {\bf v}_{12}^{\top} + {\bf u}_{21} {\bf
v}_{21}^{\top} + {\bf u}_{22} {\bf v}_{22}^{\top} ]. \nonumber\\
\end{eqnarray}

By the Cauchy-Bunyakovsky-Schwarz inequality $|( u_{kl})_x| + |(
u_{kl})_y| + |( u_{kl})_z| \leqslant |{\bf u}_{Ck}| \cdot |{\bf
u}_{Dl}| = 1$ and $|( v_{kl})_x| + |( v_{kl})_y| + |( v_{kl})_z|
\leqslant |{\bf v}_{Ck}| \cdot |{\bf v}_{Dl}| = 1$. Thus, all the
vectors ${\bf u}_{kl}$ and ${\bf v}_{kl}$ belong to the
\textit{octahedron} with vertices $(\pm 1,0,0)$, $(0, \pm 1, 0)$,
and $(0,0, \pm 1)$. Moreover, since vectors ${\bf u}_{C(D)1}, {\bf
u}_{C(D)2}, {\bf u}_{C(D)3}$ are mutually orthogonal, vectors
${\bf u}_{kl}$ and ${\bf u}_{k'l}$ (${\bf u}_{kl'}$) cannot belong
to the same octant or opposite octants if $k \neq k'$ ($l \neq
l'$). Since vectors ${\bf u}_{kl}$ and ${\bf v}_{kl}$ linearly
contribute to the expression
\begin{eqnarray}
&& \!\!\!\!\!\!\!\!\!\! (1,-\boldsymbol{\lambda}^{\top}) \, M
(\Phi_C) \ast M(\Phi_D) \, \left(%
\begin{array}{c}
  1 \\
  \boldsymbol{\lambda}'
\end{array}%
\right) \nonumber\\
&& \!\!\!\!\!\!\!\!\!\! = 1 - \sin\alpha_C \sin\alpha_D
\boldsymbol{\lambda}^{\top}{\bf u}_{33} + \sin\alpha_C
\sin\alpha_D {\bf v}_{33}^{\top} \boldsymbol{\lambda}' -
\boldsymbol{\lambda}^{\top} S \boldsymbol{\lambda}', \nonumber\\
\label{expression-through-vectors}
\end{eqnarray}

\noindent the minimal value of \eqref{expression-through-vectors}
is achieved if some vectors ${\bf u}_{kl}$ and ${\bf v}_{kl}$
correspond to the extreme points of the octahedron, i.e., to
vectors $(\pm 1,0,0)$, $(0, \pm 1, 0)$, and $(0,0, \pm 1)$.
Without loss of generality it can be assumed that ${\bf u}_{33} =
{\bf v}_{33} = (0,0,1)$, which implies ${\bf u}_{k3} = {\bf
u}_{3k} = {\bf 0}$ and ${\bf v}_{k3} = {\bf v}_{3k} = {\bf 0}$,
$k=1,2$. Then either $S = {\rm diag} (\pm \cos\alpha_C
\cos\alpha_D, \pm
\cos\alpha_C \cos\alpha_D, 1)$ or $S = \left(%
\begin{array}{ccc}
  0 & \pm \cos\alpha_C \cos\alpha_D & 0 \\
  \pm \cos\alpha_C \cos\alpha_D & 0 & 0 \\
  0 & 0 & 1 \\
\end{array}%
\right)$, where signs $\pm$ are not correlated. Inequality
\eqref{main-inequality} reduces to
\begin{eqnarray}
\label{reduced-inequality}
 0 & \leqslant & 1 - \sin\alpha_C \sin\alpha_D (\lambda_3 -
\lambda_3')
- \lambda_3 \lambda_3' \nonumber\\
&& - \cos\alpha_C \cos\alpha_D \left\{
\begin{array}{c}
  \pm \lambda_1 \lambda_1' \pm \lambda_2 \lambda_2' , \\
  \pm \lambda_1 \lambda_2' \pm \lambda_2 \lambda_1' , \\
\end{array} \right.
\end{eqnarray}

\noindent which is fulfilled for all $0 \leqslant
\alpha_C,\alpha_D \leqslant \frac{\pi}{2}$ if
$\boldsymbol{\lambda} \left(%
\begin{array}{ccc}
  \pm 1 & 0 & 0 \\
  0 & \pm 1 & 0 \\
  0 & 0 & 1 \\
\end{array}%
\right) \boldsymbol{\lambda}' \leqslant 1$, $\boldsymbol{\lambda} \left(%
\begin{array}{ccc}
  0 & \pm 1 & 0 \\
  \pm 1 & 0 & 0 \\
  0 & 0 & 1 \\
\end{array}%
\right) \boldsymbol{\lambda}' \leqslant 1$, and
$|\lambda_i|,|\lambda_i'| \leqslant 1$, $i=1,2,3$.

It can easily be checked numerically that in the general case of
arbitrary vectors ${\bf u}_{kl}$ and ${\bf v}_{kl}$,
inequality~\eqref{expression-through-vectors} is fulfilled
whenever $|\lambda_i|,|\lambda_i'| \leqslant 1$, $i=1,2,3$, and
$\boldsymbol{\lambda} P \boldsymbol{\lambda}' \leqslant 1$ for all
signed permutation matrices $P$.

\textit{Necessity}. Let the input state $\ket{\psi} =
\ket{\psi_+}$, then the output state $\Upsilon \otimes \Upsilon'
[\ket{\psi_+}\bra{\psi_+}]$ is separable by Peres-Horodecki
criterion if and only if $1 + \lambda_3 \lambda_3' \pm (\lambda_1
\lambda_1' - \lambda_2 \lambda_2') \geqslant 0$ and $1 - \lambda_3
\lambda_3' \pm (\lambda_1 \lambda_1' + \lambda_2 \lambda_2')
\geqslant 0$. Also, the state ${\cal R} \circ \Upsilon \otimes
\Upsilon' [\ket{\psi_+}\bra{\psi_+}]$ must be separable, which
corresponds to the change $\lambda_i \rightarrow - \lambda_i$. By
permuting indices $(1,2,3)$ of the second qubit, we obtain that
the condition $\boldsymbol{\lambda} P \boldsymbol{\lambda}'
\leqslant 1$ must be fulfilled for all signed permutation matrices
$P$. Permutation of indices corresponds to the change of input
state to the form $\frac{1}{\sqrt{2}}\left( \ket{\varphi} \otimes
\ket{\chi} + \ket{\varphi_{\perp}} \otimes \ket{\chi_{\perp}}
\right)$, where $\{\ket{\varphi},\ket{\varphi_{\perp}}\}$ and
$\{\ket{\chi},\ket{\chi_{\perp}}\}$ are bases of eigenvectors of
some Pauli operators.
\end{proof}

\begin{corollary}
\label{corollary} Suppose $1 \geqslant \lambda_1 \geqslant
\lambda_2 \geqslant \lambda_3 \geqslant 0$ and $1 \geqslant
\lambda_1' \geqslant \lambda_2' \geqslant \lambda_3' \geqslant 0$;
then the local two-qubit unital map $\Upsilon \otimes \Upsilon'$
is entanglement annihilating if and only if
$\boldsymbol{\lambda}^{\top} \boldsymbol{\lambda}' = \lambda_1
\lambda_1' + \lambda_2 \lambda_2' + \lambda_3 \lambda_3' \leqslant
1$.
\end{corollary}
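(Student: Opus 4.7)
The plan is to reduce the corollary to the criterion in Proposition~\ref{proposition:unital} and then observe that under the given monotonicity and positivity assumptions, the worst signed permutation is the trivial one.

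First, I would invoke Proposition~\ref{proposition:unital}: since $\Upsilon$ and $\Upsilon'$ are assumed to be positive qubit maps with $|\lambda_i|,|\lambda_i'|\leqslant 1$, the map $\Upsilon\otimes\Upsilon'$ is entanglement annihilating if and only if $\boldsymbol{\lambda}P\boldsymbol{\lambda}'\leqslant 1$ for every signed permutation matrix $P$. Thus I only need to identify $\max_{P}\boldsymbol{\lambda}P\boldsymbol{\lambda}'$ under the hypothesis $\lambda_1\geqslant\lambda_2\geqslant\lambda_3\geqslant 0$ and $\lambda_1'\geqslant\lambda_2'\geqslant\lambda_3'\geqslant 0$, and verify that it equals $\boldsymbol{\lambda}^{\top}\boldsymbol{\lambda}'$.

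Next, I would write a generic signed permutation as $P_{ij}=\epsilon_i\,\delta_{\sigma(i),j}$ with $\epsilon_i\in\{-1,+1\}$ and $\sigma\in S_3$, so that
\begin{equation}
\boldsymbol{\lambda}P\boldsymbol{\lambda}'=\sum_{i=1}^{3}\epsilon_i\,\lambda_i\lambda_{\sigma(i)}'.
\end{equation}
Because every $\lambda_i$ and $\lambda_j'$ is nonnegative, each term $\epsilon_i\lambda_i\lambda_{\sigma(i)}'$ is maximized by choosing $\epsilon_i=+1$; consequently
\begin{equation}
\max_{P}\boldsymbol{\lambda}P\boldsymbol{\lambda}'=\max_{\sigma\in S_3}\sum_{i=1}^{3}\lambda_i\lambda_{\sigma(i)}'.
\end{equation}
Finally, I would apply the rearrangement inequality: for two similarly ordered nonnegative triples, the sum $\sum_i\lambda_i\lambda_{\sigma(i)}'$ is maximized by $\sigma=\mathrm{id}$, giving the value $\lambda_1\lambda_1'+\lambda_2\lambda_2'+\lambda_3\lambda_3'=\boldsymbol{\lambda}^{\top}\boldsymbol{\lambda}'$. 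Combining this with the reduction above, the family of inequalities $\boldsymbol{\lambda}P\boldsymbol{\lambda}'\leqslant 1$ collapses to the single condition $\boldsymbol{\lambda}^{\top}\boldsymbol{\lambda}'\leqslant 1$, which proves the corollary.

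There is no real obstacle here: the corollary is essentially a combinatorial specialization of Proposition~\ref{proposition:unital}, and the only ingredient beyond it is the elementary rearrangement inequality together with the observation that nonnegativity of both vectors removes any benefit from flipping signs. If one wished to avoid invoking rearrangement as a named result, a direct verification by enumerating the six permutations of three elements is also routine.
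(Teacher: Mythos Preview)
Your proof is correct and follows essentially the same approach as the paper: the paper also reduces the corollary to Proposition~\ref{proposition:unital} and observes that $\boldsymbol{\lambda}^{\top} P \boldsymbol{\lambda}'$ is maximized over signed permutation matrices at $P = I$, from which the result follows immediately. Your version simply makes explicit the two ingredients (nonnegativity removes the signs, the rearrangement inequality selects the identity permutation) that the paper leaves as ``not hard to see.''
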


\begin{proof}
It is not hard to see that $\boldsymbol{\lambda}^{\top} P
\boldsymbol{\lambda}'$ achieves maximum among signed permutation
matrices $P$ if $P = I$. Then the statement of
Corollary~\ref{corollary} follows directly from
Proposition~\ref{proposition:unital}.
\end{proof}

In the necessity part of Proposition~\ref{proposition:unital}, we
have noticed that ultimate robust states to local noises
$\Upsilon(t) \otimes \Upsilon'(t)$ are the states of the form
$\frac{1}{\sqrt{2}}\left( \ket{\varphi} \otimes \ket{\chi} +
\ket{\varphi_{\perp}} \otimes \ket{\chi_{\perp}} \right)$, where
$\{\ket{\varphi},\ket{\varphi_{\perp}}\}$ and
$\{\ket{\chi},\ket{\chi_{\perp}}\}$ are bases of eigenvectors of
some Pauli operators.

\begin{proposition}
\label{proposition:ultimate-unital} Suppose a local two-qubit
unital noise $\Upsilon(t) \otimes \Upsilon'(t)$, with matrix
representations of $\Upsilon(t)$, $\Upsilon'(t)$ being diagonal in
the basis of Pauli operators $\sigma_1$, $\sigma_2$, $\sigma_3$.
Then the state with ultimate entanglement robustness is the
maximally entangled state $\ket{\psi_{\Upsilon\otimes\Upsilon'}} =
\frac{1}{\sqrt{2}}\left( \ket{\varphi} \otimes \ket{\chi} +
\ket{\varphi_{\perp}} \otimes \ket{\chi_{\perp}} \right)$, where
$\{\ket{\varphi},\ket{\varphi_{\perp}}\}$ and
$\{\ket{\chi},\ket{\chi_{\perp}}\}$ are orthogonal eigenvectors of
some Pauli operators ($\sigma_1$, or $\sigma_2$, or $\sigma_3$).
\end{proposition}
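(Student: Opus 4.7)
My plan is to combine the entanglement-annihilation criterion of Proposition~\ref{proposition:unital} with an explicit Peres-Horodecki analysis of the candidate family of input states.

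First, I would associate to every signed permutation matrix $P$ a maximally entangled state $\ket{\psi_P}=\frac{1}{\sqrt{2}}(\ket{\varphi}\otimes\ket{\chi}+\ket{\varphi_\perp}\otimes\ket{\chi_\perp})$ of the claimed form, where $\{\ket{\varphi},\ket{\varphi_\perp}\}$ and $\{\ket{\chi},\ket{\chi_\perp}\}$ are eigenbases of Pauli operators chosen according to $P$. Its density operator has Bloch expansion $\frac{1}{4}(I\otimes I+\sum_{k=1}^{3}\varepsilon_k\,\sigma_{\pi(k)}\otimes\sigma_{\pi'(k)})$ in which the axis permutations $\pi,\pi'$ and the signs $\varepsilon_k=\pm 1$ are read off from $P$. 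Because $\Upsilon(t)\otimes\Upsilon'(t)$ is Pauli-diagonal, the output is again of this form with each coefficient rescaled by $\lambda_{\pi(k)}(t)\lambda'_{\pi'(k)}(t)$. Applying the Peres-Horodecki criterion to this Bell-diagonal output (partial transpose merely flips the sign of the $\sigma_2$ component) yields four companion inequalities, one of which is exactly $\boldsymbol{\lambda}(t)\,P\,\boldsymbol{\lambda}'(t)\leqslant 1$. In particular, the output of $\ket{\psi_P}$ remains entangled as long as $\boldsymbol{\lambda}(t)\,P\,\boldsymbol{\lambda}'(t)>1$.

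Second, let $\tau_P:=\inf\{t:\boldsymbol{\lambda}(t)\,P\,\boldsymbol{\lambda}'(t)\leqslant 1\}$ and pick $P^*$ maximising $\tau_P$ over signed permutations, with $\tau^*:=\tau_{P^*}$. At time $\tau^*$ the inequality $\boldsymbol{\lambda}(\tau^*)\,P\,\boldsymbol{\lambda}'(\tau^*)\leqslant 1$ holds for every signed permutation $P$ by definition of $P^*$, so Proposition~\ref{proposition:unital} guarantees that $\Upsilon(\tau^*)\otimes\Upsilon'(\tau^*)$ is entanglement annihilating; hence every input state produces a separable output for all $t\geqslant\tau^*$. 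On the other hand, the first paragraph shows that $\ket{\psi_{P^*}}$ has an entangled output for every $t<\tau^*$. Therefore $\ket{\psi_{P^*}}$ attains the maximal disentangling time $\tau^*$ and has the form asserted by the proposition.

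The only thing needing care is the bookkeeping in the first step: I would have to check that every signed permutation matrix $P$ really does arise from some state of the stated form after absorbing phases and using the freedom to swap $\ket{\varphi}\leftrightarrow\ket{\varphi_\perp}$ and $\ket{\chi}\leftrightarrow\ket{\chi_\perp}$. This amounts to a finite enumeration over the nine ordered pairs of Pauli axes together with the accompanying sign choices—essentially the correspondence already sketched in the necessity argument for Proposition~\ref{proposition:unital}—so I do not expect a genuine obstacle there.
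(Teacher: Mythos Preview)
Your strategy is exactly the paper's own: Proposition~\ref{proposition:ultimate-unital} is stated there without a separate proof, being treated as an immediate consequence of Proposition~\ref{proposition:unital} together with the observation (made in the necessity part of that proof) that each signed-permutation inequality $\boldsymbol{\lambda}P\boldsymbol{\lambda}'\leqslant 1$ is precisely the Peres--Horodecki separability condition for the output of one of the maximally entangled states of the asserted form.

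One technical correction is needed. With your definition $\tau_P=\inf\{t:\boldsymbol{\lambda}(t)P\boldsymbol{\lambda}'(t)\leqslant 1\}$ you only know that the inequality for $P$ holds \emph{first} at $\tau_P$, not that it continues to hold afterwards; hence the claim ``at time $\tau^*$ the inequality holds for every $P$'' and the conclusion that $\Upsilon(t)\otimes\Upsilon'(t)$ is entanglement annihilating for all $t\geqslant\tau^*$ do not follow as written. Replace the infimum by $\tau_P=\sup\{t:\boldsymbol{\lambda}(t)P\boldsymbol{\lambda}'(t)>1\}$, which matches the paper's definition of entanglement lifetime (the minimal $\tau$ after which the output is separable for all $t\geqslant\tau$). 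Then for every $t\geqslant\tau^*=\max_P\tau_P$ all inequalities hold, so the map is entanglement annihilating; and the output of $\ket{\psi_{P^*}}$ violates the $P^*$-inequality at times arbitrarily close to $\tau^*$, so its entanglement lifetime equals $\tau^*$. With this fix your argument is complete.
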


\begin{example}
\label{example-1} Consider an amplitude damping process of a
two-level system (see, e.g., Ref.~\cite{nielsen-2000}, section
8.3.5), when the temperature of the environment is so high
(thermal energy $kT \gg \Delta E$, energy level separation) that
the rate of spontaneous emission equals the rate of spontaneous
absorbtion. If this is the case, then Markov approximation leads
to the following master equation in the interaction
picture~(\cite{breuer-petruccione-2002}, section 10.1):
\begin{eqnarray}
\frac{d\varrho}{dt} &=& \gamma \left( \sigma_{+} \varrho
\sigma_{-} - \frac{1}{2} \left\{ \sigma_{-}\sigma_{+} , \varrho
\right\} \right) \nonumber\\
&& + \gamma \left( \sigma_{-} \varrho \sigma_{+} - \frac{1}{2}
\left\{ \sigma_{+}\sigma_{-} , \varrho \right\} \right),
\end{eqnarray}

\noindent where $\{ \cdot, \cdot \}$ denotes anticommutator,
$\sigma_{\pm} = \frac{1}{2}(\sigma_1 \pm \sigma_2)$, and $\gamma
> 0$ is the damping rate. Solution of this master equation results
in a unital map~\eqref{Upsilon} with $\lambda_1(t) = \lambda_2(t)
= e^{-\gamma t}$ and $\lambda_3(t) = e^{-2\gamma t}$.

Suppose two qubits each experiencing amplitude damping in a
high-temperature environment with damping rates $\gamma$ and
$\gamma'$, respectively. Then the maximally entangled state with
one excitation $\ket{\psi} = \frac{1}{\sqrt{2}}(\ket{0} \otimes
\ket{1} + \ket{1} \otimes \ket{0})$ exhibits the maximal
entanglement robustness and the entanglement lifetime is
determined by the equation $\lambda_1(t) \lambda_1'(t) +
\lambda_2(t) \lambda_2'(t) + \lambda_3(t) \lambda_3'(t) = 1$,
i.e., $2 e^{-(\gamma + \gamma')t} + e^{-2 (\gamma + \gamma')t} =
1$. The maximal entanglement lifetime equals $\widetilde{\tau} =
\frac{\ln(\sqrt{2}+1)}{\gamma+\gamma'} \approx
\frac{0.88}{\gamma+\gamma'}$.
\end{example}

\begin{example}
\label{example-2} Suppose a pair of entangled qubits is prepared
in laboratory $A$; one qubit is kept in the quantum memory cell of
laboratory $A$ and the other is sent to laboratory $B$. The qubit
in laboratory $A$ is subjected to amplitude damping in a
high-temperature environment with damping rates $\gamma$, and the
itinerant qubit experiences depolarization with dissipator
$\mathcal{L}=\gamma' \sum_{j=1}^3 \left( \sigma_j \varrho \sigma_j
- \varrho \right)$. Then $\lambda_1(t) = \lambda_2(t) = e^{-\gamma
t}$, $\lambda_3(t) = e^{-2\gamma t}$, and $\lambda_1'(t) =
\lambda_2'(t) = \lambda_3'(t) = e^{-\gamma' t}$. From
Corollary~\ref{corollary}, it follows that the state $\ket{\psi} =
\frac{1}{\sqrt{2}}(\ket{0} \otimes \ket{1} + \ket{1} \otimes
\ket{0})$ is ultimately robust to entanglement loss. The maximal
entanglement lifetime $\widetilde{\tau}$ is a solution of equation
$(1+e^{-\gamma t})^2 = 1+e^{\gamma' t}$ and approximately equals
$\widetilde{\tau} \approx \frac{3 \ln 3}{4\gamma + 3\gamma'}$.
This shows that entanglement is more sensitive to the decoherence
rate in the memory cell (rate of the amplitude damping process).
\end{example}

\section{Non-unital channels}

\subsection{Ultimate robustness}
\label{section:nonunital-robustness}

We continue using notation $\Phi_A$ for a completely positive map
with a single Kraus operator $A$, i.e., $\Phi_A[X] = A X
A^{\dag}$. The recent result of Ref.~\cite{aubrun-2015} suggests
that if $\Phi$ is a qubit map belonging to the interior of the
cone of positivity-preserving maps, then there exist
positive-definite operators $A$ and $B$ acting on $\mathcal{H}_2$
such that the map
\begin{equation}
\Upsilon = \Phi_A \circ \Phi \circ \Phi_B
\end{equation}

\noindent is unital. This result can be viewed as a quantum
analogue of Sinkhorn's theorem~\cite{gurvits-2004}. One can always
treat map $\Upsilon$ as diagonal in the basis of Pauli operators
because appropriate unitary rotations of input and output bases
can be attributed to operators $B$ and $A$, respectively.
Alternatively, $\Phi = \Phi_{A^{-1}} \circ \Upsilon \circ
\Phi_{B^{-1}}$. The latter equation is nothing else but a
decomposition of nonunital positive qubit map $\Phi$ through some
unital map $\Upsilon$. The time-dependent version of this relation
for quantum dynamical maps takes the form
\begin{equation}
\label{decomposition} \Phi(t) = \Phi_{A^{-1}(t)} \circ \Upsilon(t)
\circ \Phi_{B^{-1}(t)}.
\end{equation}

\begin{proposition}
\label{proposition:nonunital} Suppose a local two-qubit noise
$\Phi(t) \otimes \Phi'(t)$, where both $\Phi(t)$ and $\Phi'(t)$
adopt decompositions \eqref{decomposition} with nondegenerate
operators $A(t)$, $B(t)$, $A'(t)$, $B'(t)$ and unital diagonal
maps $\Upsilon(t)$ and $\Upsilon'(t)$. Then $\Phi(t) \otimes
\Phi'(t)$ is entanglement annihilating if and only if $\Upsilon(t)
\otimes \Upsilon'(t)$ is entanglement annihilating. Ultimate
robustness to loss of entanglement exhibits the state of the form
\begin{equation}
\label{ultimate-nonunital} \ket{\psi_{\Phi\otimes\Phi'}} =
\frac{B(\widetilde{\tau}) \otimes B'(\widetilde{\tau})
\ket{\psi_{\Upsilon\otimes\Upsilon'}}}{\sqrt{\bra{\psi_{\Upsilon\otimes\Upsilon'}}
B^{\dag}(\widetilde{\tau})B(\widetilde{\tau}) \otimes
B'(\widetilde{\tau})^{\dag}B'(\widetilde{\tau})
\ket{\psi_{\Upsilon\otimes\Upsilon'}}}},
\end{equation}

\noindent where $\ket{\psi_{\Upsilon\otimes\Upsilon'}}$ is given
by Proposition~\ref{proposition:ultimate-unital} and
$\widetilde{\tau}$ is the maximal entanglement lifetime under
noise $\Upsilon(t) \otimes \Upsilon'(t)$.
\end{proposition}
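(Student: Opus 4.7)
The plan is to exploit the decomposition \eqref{decomposition} to reduce the nonunital problem to the unital one already handled by Proposition~\ref{proposition:ultimate-unital}. Writing the two-qubit map as
\begin{equation}
\Phi(t)\otimes\Phi'(t)=\bigl(\Phi_{A^{-1}(t)}\otimes\Phi_{A'^{-1}(t)}\bigr)\circ\bigl(\Upsilon(t)\otimes\Upsilon'(t)\bigr)\circ\bigl(\Phi_{B^{-1}(t)}\otimes\Phi_{B'^{-1}(t)}\bigr),
\end{equation}
the outer local sandwich $\rho\mapsto (K\otimes K')\rho (K\otimes K')^\dag$ with invertible $K,K'$ is a local SLOCC transformation: it manifestly preserves separability (by linearity), and, being invertible, preserves entanglement as well. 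Thus whether a given output of $\Phi\otimes\Phi'$ is separable or entangled is decided at the intermediate stage, i.e., by whether $(\Upsilon(t)\otimes\Upsilon'(t))[\sigma(t)]$ is separable or entangled, where
\begin{equation}
\sigma(t)=\frac{(B^{-1}(t)\otimes B'^{-1}(t))\,\varrho_{\rm in}\,(B^{-1}(t)\otimes B'^{-1}(t))^\dag}{{\rm tr}\bigl[(B^{-1}(t)\otimes B'^{-1}(t))\,\varrho_{\rm in}\,(B^{-1}(t)\otimes B'^{-1}(t))^\dag\bigr]}.
\end{equation}

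For the first claim (equivalence of entanglement annihilation), I would observe that, since $B(t)$ and $B'(t)$ are nondegenerate, the map $\varrho_{\rm in}\mapsto\sigma(t)$ is a bijection on the set of (normalized) two-qubit states. Hence $\Phi(t)\otimes\Phi'(t)$ produces a separable output for every $\varrho_{\rm in}$ if and only if $\Upsilon(t)\otimes\Upsilon'(t)$ does so for every $\sigma$. This yields the equivalence and, in particular, identifies the entanglement-annihilation time of $\Phi(t)\otimes\Phi'(t)$ with the entanglement-annihilation time $\widetilde{\tau}$ of $\Upsilon(t)\otimes\Upsilon'(t)$, which by the discussion of Proposition~\ref{proposition:unital} coincides with the maximal entanglement lifetime under the unital noise.

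For the second claim, I would argue that any input achieving lifetime $\widetilde{\tau}$ under $\Phi\otimes\Phi'$ must be such that $\sigma(\widetilde{\tau})$ achieves lifetime $\widetilde{\tau}$ under $\Upsilon\otimes\Upsilon'$. By Proposition~\ref{proposition:ultimate-unital}, this forces $\sigma(\widetilde{\tau})=\ket{\psi_{\Upsilon\otimes\Upsilon'}}\bra{\psi_{\Upsilon\otimes\Upsilon'}}$. Inverting the sandwich by $B(\widetilde{\tau})\otimes B'(\widetilde{\tau})$ and renormalizing gives exactly the pure state \eqref{ultimate-nonunital}. Purity of the candidate is automatic because conjugation by an invertible operator maps rank-one projectors to rank-one (unnormalized) projectors, so the preimage of a pure maximally entangled state is pure.

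The main obstacle is verifying that the candidate $\ket{\psi_{\Phi\otimes\Phi'}}$ actually attains lifetime $\widetilde{\tau}$, rather than just being the unique state consistent with attaining it. The issue is that, unlike in the unital case, $\sigma(t)$ drifts with $t$ because $B(t),B'(t)$ do; so $\sigma(t)\neq\ket{\psi_{\Upsilon\otimes\Upsilon'}}\bra{\psi_{\Upsilon\otimes\Upsilon'}}$ for $t<\widetilde{\tau}$. I would close this gap with a continuity argument: the set of states that survive $\Upsilon(t)\otimes\Upsilon'(t)$ (remaining entangled after the action) is open in the state space and contains the optimal state on a neighborhood that shrinks to it as $t\uparrow\widetilde{\tau}$, while $\sigma(t)$ depends continuously on $t$ and converges to $\ket{\psi_{\Upsilon\otimes\Upsilon'}}\bra{\psi_{\Upsilon\otimes\Upsilon'}}$ as $t\uparrow\widetilde{\tau}$; combined with the upper bound from the entanglement-annihilation time, this pins the lifetime of $\ket{\psi_{\Phi\otimes\Phi'}}$ to exactly $\widetilde{\tau}$.
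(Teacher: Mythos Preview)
Your argument for the equivalence of entanglement annihilation is essentially identical to the paper's: both use that invertible local filtering preserves separability in both directions, and that $\varrho_{\rm in}\mapsto\sigma(t)$ is a bijection on states, so the two maps are entanglement annihilating at exactly the same times.

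For the ultimately robust state, you have actually been more careful than the paper. The paper's proof writes
\[
\Phi(t)\otimes\Phi'(t)\bigl[\ket{\psi_{\Phi\otimes\Phi'}}\bra{\psi_{\Phi\otimes\Phi'}}\bigr]\ \propto\ A^{-1}(t)\otimes A'^{-1}(t)\ \bigl(\Upsilon(t)\otimes\Upsilon'(t)\bigr)\bigl[\ket{\psi_{\Upsilon\otimes\Upsilon'}}\bra{\psi_{\Upsilon\otimes\Upsilon'}}\bigr]\ A^{\dag-1}(t)\otimes A'^{\dag-1}(t)
\]
as if it held for all $t$, and concludes directly that the output is entangled iff the unital output is. But since $\ket{\psi_{\Phi\otimes\Phi'}}\propto B(\widetilde{\tau})\otimes B'(\widetilde{\tau})\ket{\psi_{\Upsilon\otimes\Upsilon'}}$, the inner sandwich gives $B^{-1}(t)B(\widetilde{\tau})\otimes B'^{-1}(t)B'(\widetilde{\tau})\ket{\psi_{\Upsilon\otimes\Upsilon'}}$, which is only proportional to $\ket{\psi_{\Upsilon\otimes\Upsilon'}}$ at $t=\widetilde{\tau}$. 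So the paper has silently skipped exactly the obstacle you flagged.

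Your proposed continuity patch, however, is not yet a proof. You are comparing two quantities that both vanish as $t\uparrow\widetilde{\tau}$: the distance from $\sigma(t)$ to $\ket{\psi_{\Upsilon\otimes\Upsilon'}}\bra{\psi_{\Upsilon\otimes\Upsilon'}}$, and the radius of the neighborhood around $\ket{\psi_{\Upsilon\otimes\Upsilon'}}$ that still survives $\Upsilon(t)\otimes\Upsilon'(t)$. Saying one ``shrinks to'' a point while the other ``converges to'' it does not resolve the race. To make this rigorous you would need a quantitative estimate, e.g.\ from the explicit separability condition $\boldsymbol{\lambda}(t)^{\top}P\boldsymbol{\lambda}'(t)\leqslant 1$ of Proposition~\ref{proposition:unital}, showing that the surviving neighborhood has radius of order $\boldsymbol{\lambda}(t)^{\top}\boldsymbol{\lambda}'(t)-1$ while $\|\sigma(t)-\ket{\psi_{\Upsilon\otimes\Upsilon'}}\bra{\psi_{\Upsilon\otimes\Upsilon'}}\|$ is controlled by $\|B^{-1}(t)B(\widetilde{\tau})-I\|$, and then comparing these rates under mild smoothness assumptions on the dynamics. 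Without such an estimate, the argument remains heuristic; with it, you would have closed a gap the paper leaves open.
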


\begin{proof}
Since $\Phi(t) \otimes \Phi'(t) [\ket{\psi}\bra{\psi}] = A^{-1}(t)
\otimes A'^{-1}(t) \big( \Upsilon(t) \otimes \Upsilon'(t) \big)[
B^{-1}(t) \otimes B'^{-1}(t) \ket{\psi} \bra{\psi} B^{\dag -1}(t)
\otimes B'^{\dag -1}(t) ] A^{\dag -1}(t) \otimes A'^{\dag -1}(t)$
and both $A(t)$ and $A'(t)$ are non-degenerate, then $\Phi(t)
\otimes \Phi'(t) [\ket{\psi}\bra{\psi}]$ is separable if and only
if $\big( \Upsilon(t) \otimes \Upsilon'(t) \big)[ B^{-1}(t)
\otimes B'^{-1}(t) \ket{\psi} \bra{\psi} B^{\dag -1}(t) \otimes
B'^{\dag -1}(t) ]$ belongs to a cone of separable operators. Thus
$\Phi(t) \otimes \Phi'(t) [\ket{\psi}\bra{\psi}]$ is separable for
all $\ket{\psi}$ if and only if $\big( \Upsilon(t) \otimes
\Upsilon'(t) \big)[ B^{-1}(t) \otimes B'^{-1}(t) \ket{\psi}
\bra{\psi} B^{\dag -1}(t) \otimes B'^{\dag -1}(t) ]$ is a
separable operator for all $\ket{\psi}$. As both $B(t)$ and
$B'(t)$ are nondegenerate, the linear span of operators $B^{-1}(t)
\otimes B'^{-1}(t) \ket{\psi} \bra{\psi} B^{\dag -1}(t) \otimes
B'^{\dag -1}(t)$ for all $\ket{\psi}$ is a cone of positive
operators. Thus, $\Phi(t) \otimes \Phi'(t) [\ket{\psi}\bra{\psi}]$
is separable for all $\ket{\psi}$ if and only if $\big(
\Upsilon(t) \otimes \Upsilon'(t) \big)[ \varrho ]$ is separable
for all density operators $\varrho$, i.e., $\Upsilon(t) \otimes
\Upsilon'(t)$ is entanglement annihilating. Since $\Phi(t) \otimes
\Phi'(t) [\ket{\psi_{\Phi \otimes \Phi'}}\bra{\psi_{\Phi \otimes
\Phi'}}] \propto A^{-1}(t) \otimes A'^{-1}(t) \big( \Upsilon(t)
\otimes \Upsilon'(t) \big)[ \ket{\psi_{\Upsilon \otimes
\Upsilon'}} \bra{\psi_{\Upsilon \otimes \Upsilon'}} ] A^{\dag
-1}(t) \otimes A'^{\dag -1}(t)$, then $\Phi(t) \otimes \Phi'(t)
[\ket{\psi_{\Phi \otimes \Phi'}}\bra{\psi_{\Phi \otimes \Phi'}}]$
is entangled if and only if the state $\Upsilon(t) \otimes
\Upsilon'(t) [ \ket{\psi_{\Upsilon \otimes \Upsilon'}}
\bra{\psi_{\Upsilon \otimes \Upsilon'}} ]$ is entangled.
Therefore, \eqref{ultimate-nonunital} exhibits ultimate robustness
to loss of entanglement if $\ket{\psi_{\Upsilon\otimes\Upsilon'}}$
is ultimately robust to loss of entanglement due to unital noises
$\Upsilon(t) \otimes \Upsilon'(t)$.
\end{proof}

\subsection{Explicit decomposition of nonunital qubit maps}
\label{section:nonunital-decomposition}

To utilize Proposition~\ref{proposition:nonunital} for particular
physical systems, one needs to know explicitly the operators $A$
and $B$ as well as the unital map $\Upsilon$ in formula
\eqref{decomposition} for a given qubit channel $\Phi$. In what
follows, we develop ideas of Ref.~\cite{aubrun-2015} to find such
explicit expressions.

By a suitable choice of input and output bases, one can reduce the
matrix representation of any nonunital qubit channel $\Phi$ to the
following form~\cite{ruskai-2002}:
\begin{equation}
\label{matrix-nonunital}
M(\Phi) = \left(%
\begin{array}{cccc}
  1 & 0 & 0 & 0 \\
  t_1 & \lambda_1 & 0 & 0 \\
  t_2 & 0 & \lambda_2 & 0 \\
  t_3 & 0 & 0 & \lambda_3 \\
\end{array}%
\right).
\end{equation}

Formula $\varrho = \frac{1}{2}(I + \sum_{j=1}^3 r_j \sigma_j)$
establishes a one-to-one correspondence between qubit density
operators $\varrho$ and real Bloch vectors ${\bf r} =
(r_1,r_2,r_3)^{\top}$ satisfying $|{\bf r}| = \sqrt{\sum_{j=1}^3
r_j^2} \leqslant 1$. The Bloch vector of the density operator
$\Phi[\varrho]$ is $(\lambda_1 r_1 + t_1, \lambda_2 r_2 + t_2,
\lambda_3 r_3 + t_3)^{\top}$. From this geometrical picture, it is
not hard to see that positivity of the map $\Phi$ implies
\begin{equation}
\label{positivity} \sum_{j=1}^3 \frac{t_j^2}{(1-|\lambda_j|)^2}
\leqslant 1.
\end{equation}

\noindent This necessary condition for positivity of $\Phi$ means
that the vector ${\bf t}$ has to belong to an ellipsoid with
principal axes of length $2(1-|\lambda_j|)$, $j=1,2,3$. If
$|\lambda_j| = 1$, then $t_j = 0$, and the ratio
$\frac{t_j^2}{(1-|\lambda_j|)^2}$ should be treated as zero.

Following Ref.~\cite{aubrun-2015}, we introduce operators
$\widetilde{A} = \sqrt{S}$ and $\widetilde{B} =
(\Phi^{\dag}[S])^{-1/2}$, where the positive Hermitian operator
$S$ is a fixed point of the map $F[S] =
(\Phi[(\Phi^{\dag}[S])^{-1}])^{-1}$ and $\Phi^{\dag}$ is a dual
linear map such that ${\rm tr} \big[ \Phi^{\dag}[X] Y \big] = {\rm
tr} \big[ X \Phi[Y] \big]$ for all $X,Y$. Ref.~\cite{aubrun-2015}
shows that the map $\Phi_{\widetilde{A}} \circ \Phi \circ
\Phi_{\widetilde{B}}$ is unital and positive if $\Phi$ belongs to
the interior of the cone of positivity-preserving maps, although
matrix representation of the map $\Phi_{\widetilde{A}} \circ \Phi
\circ \Phi_{\widetilde{B}}$ is not necessarily diagonal.

\begin{figure}
\includegraphics[width=8.5cm]{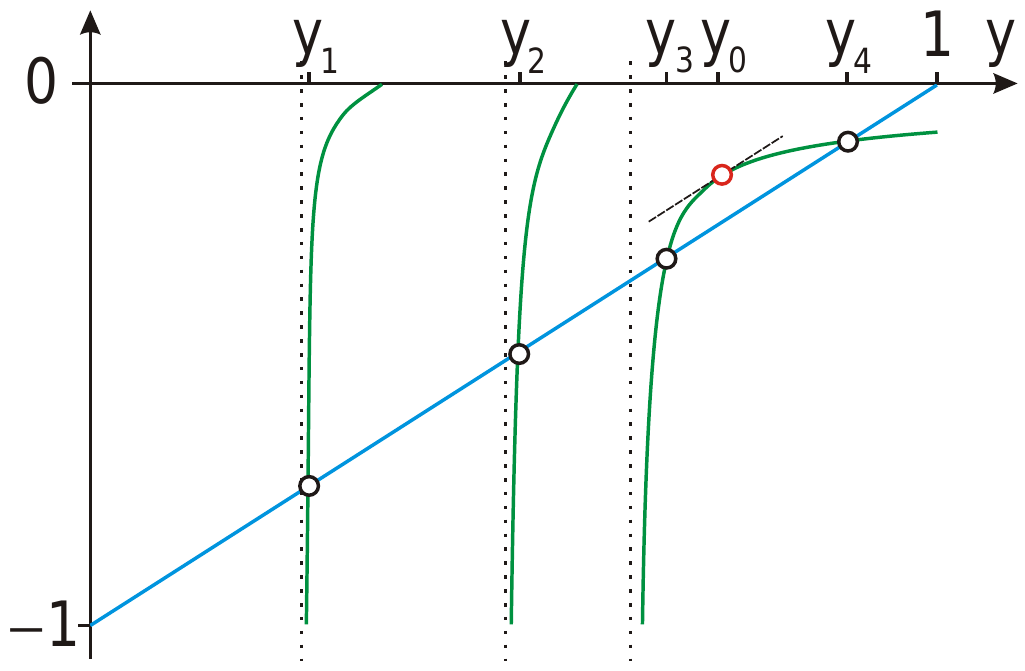}
\caption{\label{figure2} Graphical solution of Eq.~\eqref{y}, all
quantities are dimensionless.}
\end{figure}

We develop results of Ref.~\cite{aubrun-2015} and find $S$
explicitly. We fix ${\rm tr}[S] = 2$, denote $S = I + \sum_{j=1}^3
x_j \sigma_j$, and introduce a new variable $y = 1 + \sum_{j=1}^3
t_j x_j$. Then equation $S = F[S]$ reduces to
\begin{eqnarray}
&& \label{y} y - 1 = y \sum_{j=1}^3 \frac{t_j^2}{\lambda_j^2 - y}, \\
&& \label{x} x_j = \frac{y t_j}{\lambda_j^2 - y}, \quad j=1,2,3.
\end{eqnarray}

Eq.~\eqref{y} is nothing else but a quartic equation
\begin{equation}
\label{quartic} y^4 + b y^3 + c y^2 + d y + e = 0
\end{equation}

\noindent with coefficients
\begin{eqnarray}
b & = & t_1^2 + t_2^2 + t_3^2 - \lambda_1^2 - \lambda_2^2 -
\lambda_3^2 - 1, \\
c & = & \lambda_1^2 (1 - t_2^2 - t_3^2) + \lambda_2^2 (1 - t_1^2 -
t_3^2)+ \lambda_3^2 (1 - t_1^2 - t_2^2) \nonumber\\
&& + \lambda_1^2\lambda_2^2 + \lambda_2^2\lambda_3^2 +
\lambda_3^2\lambda_1^2, \\
d & = & t_1^2 \lambda_2^2 \lambda_3^2 + \lambda_1^2 t_2^2
\lambda_3^2 + \lambda_1^2 \lambda_2^2 t_3^2 - \lambda_1^2
\lambda_2^2 \lambda_3^2 \nonumber\\
&& - \lambda_1^2\lambda_2^2 - \lambda_2^2\lambda_3^2 -
\lambda_3^2\lambda_1^2, \\
e & = & \lambda_1^2 \lambda_2^2 \lambda_3^2,
\end{eqnarray}

\noindent so it can be readily solved analytically, e.g. by
Ferrari's method~\cite{kurosh-1980}. Let us demonstrate that if
$\Phi$ is a positive map, then the obtained equation has four real
non-negative roots (possibly degenerate), with the greatest one
guaranteeing positivity of operator $S$.

In fact, a graph of the left-hand side of Eq.~\eqref{y} is a line,
and a graph of the right-hand side of Eq.~\eqref{y} has (in
general) three vertical asymptotes at points $y = \lambda_j^2
\leqslant 1$; see Fig.~\ref{figure2}. Thus, Eq.~\eqref{y}
definitely has two real roots $y_{1,2} \in [0,\max \lambda_j^2)$.
The derivative of the right-hand side of Eq.~\eqref{y} equals 1 at
point $y_0
> \max \lambda_j^2$ satisfying $\sum_{j=1}^3 \frac{t_j^2 \lambda_j^2}{(\lambda_j^2 - y_0)^2} =
1$. From this follows that $y_0 \leqslant |\lambda_j|$ for all
$j=1,2,3$ because otherwise we encounter a contradiction $1 =
\sum_{j=1}^3 \frac{t_j^2 \lambda_j^2}{(\lambda_j^2 - y_0)^2} <
\sum_{j=1}^3 \frac{t_j^2}{(1-|\lambda_j|)^2} \leqslant 1$, cf.
Eq.~\eqref{positivity}. Thus, $y_0 \leqslant |\lambda_j|$ and the
right hand side of Eq.~\eqref{y} equals $y_0 - \sum_{j=1}^3
\frac{t_j^2 y_0^2}{(\lambda_j^2 - y_0)^2} \geqslant y_0 -
\sum_{j=1}^3 \frac{t_j^2 \lambda_j^2}{(\lambda_j^2 - y_0)^2} = y_0
- 1$. Therefore, at point $y=y_0$ the right-hand side of
Eq.~\eqref{y} is larger than or equal to the left-hand side of
Eq.~\eqref{y}, so Eq.~\eqref{y} has two more real roots $y_{3,4}
\in (\max \lambda_j^2,1]$; see Fig.~\ref{figure2}. Moreover, the
derivative of the right-hand side of Eq.~\eqref{y} at the largest
root $y_4$ is less than or equal to 1, which readily implies that
values $x_1,x_2,x_3$ corresponding to this root satisfy
$\sum_{j=1}^3 x_j^2 \leqslant 1$, i.e., the operator $S$ is
positive semidefinite. If $\Phi$ belongs to the interior of
positive maps, then $S$ is positive.

Calculating $\widetilde{A}$, $\widetilde{B}$ and simplifying
unitary map $\Phi_{\widetilde{A}} \circ \Phi \circ
\Phi_{\widetilde{B}}$ as much as possible, we obtain the following
result.

\begin{proposition}
\label{proposition:decomposition} Suppose a nonunital qubit map
$\Phi$, which belongs to the interior of the cone of positivity
preserving maps and is defined by matrix
representation~\eqref{matrix-nonunital}. Let the largest real root
$y$ of quartic equation~\eqref{quartic} define coefficients $x_j$,
$j=1,2,3$ by Eq.~\eqref{x}. Let $x = \sqrt{\sum_{j=1}^3 x_j^2}$
and $\xi = \sqrt{\sum_{j=1}^3 \lambda_j^2 x_j^2}$, then operators
\begin{eqnarray}
\widetilde{A} &=& \frac{\sqrt{1+x} + \sqrt{1-x}}{2} I +
\frac{\sqrt{1+x} - \sqrt{1-x}}{2x}
\sum_{j=1}^3 x_j \sigma_j, \nonumber\\
\\
\widetilde{B} &=& \frac{\sqrt{y+\xi} +
\sqrt{y-\xi}}{2\sqrt{y^2-\xi^2}} I - \frac{\sqrt{y+\xi} -
\sqrt{y-\xi}}{2 \xi \sqrt{y^2-\xi^2}}
\sum_{j=1}^3 \lambda_j x_j \sigma_j \nonumber\\
\end{eqnarray}

\noindent are Hermitian and positive; the map
$\Phi_{\widetilde{A}} \circ \Phi \circ \Phi_{\widetilde{B}}$ is
unital, positive, trace preserving, and its matrix representation
reads $M_{00}(\Phi_{\widetilde{A}} \circ \Phi \circ
\Phi_{\widetilde{B}}) = 1$, $M_{0i}(\Phi_{\widetilde{A}} \circ
\Phi \circ \Phi_{\widetilde{B}}) = M_{i0}(\Phi_{\widetilde{A}}
\circ \Phi \circ \Phi_{\widetilde{B}}) = 0$,
\begin{eqnarray}
\label{matrix-unital-non-diagonal} && M_{ij}(\Phi_{\widetilde{A}}
\circ \Phi \circ \Phi_{\widetilde{B}})
 = \frac{1-x^2}{\sqrt{y^2-\xi^2}} \Bigg[ \frac{\lambda_i
\delta_{ij}}{\sqrt{1-x^2}} \,  \nonumber\\
&& + \left(\frac{1-\sqrt{1-x^2}}{x^2
\sqrt{y^2-\xi^2}}-\frac{(y-\sqrt{y^2-\xi^2}) \lambda_i^2}{\xi ^2
\sqrt{1-x^2} y}\right) x_i \lambda_j x_j \Bigg], \qquad
\end{eqnarray}

\noindent where $i,j=1,2,3$, and $\delta_{ij}$ is the Kronecker
delta. Conventional decomposition of matrix
\eqref{matrix-unital-non-diagonal}
\begin{equation}
\label{diagonalization} \left[ M_{ij}(\Phi_{\widetilde{A}} \circ
\Phi \circ \Phi_{\widetilde{B}}) \right]_{i,j=1,2,3} =
Q_{\widetilde{U}} {\rm
diag}(\widetilde{\lambda}_1,\widetilde{\lambda}_2,\widetilde{\lambda}_3)
Q_{\widetilde{V}}
\end{equation}

\noindent with orthogonal matrices $Q_{\widetilde{U}}$ and
$Q_{\widetilde{V}}$, $\det Q_{\widetilde{U}} = \det
Q_{\widetilde{V}} = 1$, leads to the unital map $\Upsilon =
\Phi_{\widetilde{U}^{\dag}\widetilde{A}} \circ \Phi \circ
\Phi_{\widetilde{B}\widetilde{V}^{\dag}}$ with diagonal matrix
representation $M(\Upsilon) = {\rm diag}(1,
\widetilde{\lambda}_1,\widetilde{\lambda}_2,\widetilde{\lambda}_3)$.
Operators $A = \widetilde{U}^{\dag}\widetilde{A}$ and $B =
\widetilde{B}\widetilde{V}^{\dag}$.
\end{proposition}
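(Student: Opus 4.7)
The plan is to take the fixed-point operator $S = I + \sum_{j=1}^{3} x_j \sigma_j$ constructed in the preceding discussion via Eqs.~\eqref{y}--\eqref{x}, set $\widetilde{A} = \sqrt{S}$ and $\widetilde{B} = (\Phi^{\dag}[S])^{-1/2}$ following Ref.~\cite{aubrun-2015}, simplify these via Pauli spectral decompositions, multiply out the Bloch matrix of $\Phi_{\widetilde{A}} \circ \Phi \circ \Phi_{\widetilde{B}}$ directly, and finally absorb the residual orthogonal freedom into $\widetilde{A}$ and $\widetilde{B}$ to diagonalize the Pauli block. Unitality and trace preservation fall out of the fixed-point identity $S = (\Phi[(\Phi^{\dag}[S])^{-1}])^{-1}$ as one-line consequences: $\widetilde{A}\,\Phi[\widetilde{B}^2]\,\widetilde{A} = \sqrt{S}\,S^{-1}\,\sqrt{S} = I$, and for any $X$ one has ${\rm tr}[\widetilde{A}\,\Phi[\widetilde{B} X \widetilde{B}]\,\widetilde{A}] = {\rm tr}[\widetilde{B}^2 \Phi^{\dag}[S] X] = {\rm tr}[X]$, the last step by the definition of $\widetilde{B}$.

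For the closed-form expressions I would use spectral calculus in the Pauli basis. Since $S$ has eigenvalues $1 \pm x$ with eigenprojectors $\Pi_{\pm} = \frac{1}{2}\bigl(I \pm x^{-1}\sum_j x_j \sigma_j\bigr)$, the square root $\sqrt{S} = \sqrt{1+x}\,\Pi_+ + \sqrt{1-x}\,\Pi_-$ reorganizes immediately into the stated $\widetilde{A}$. For $\widetilde{B}$ the key computation is $\Phi^{\dag}[S]$: because $M(\Phi^{\dag}) = M(\Phi)^{\top}$, the Bloch vector $(1,x_1,x_2,x_3)^{\top}$ of $S$ is sent to $(1 + \sum_j t_j x_j,\, \lambda_1 x_1,\, \lambda_2 x_2,\, \lambda_3 x_3)^{\top} = (y,\lambda_1 x_1, \lambda_2 x_2, \lambda_3 x_3)^{\top}$, identifying the constant term with the auxiliary variable from Eq.~\eqref{y}. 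Hence $\Phi^{\dag}[S] = y I + \sum_j \lambda_j x_j \sigma_j$ has eigenvalues $y \pm \xi$; applying the same spectral recipe to its inverse square root yields the stated $\widetilde{B}$. Positivity of $\widetilde{A}$ and $\widetilde{B}$ is inherited from positivity of $S$ and $\Phi^{\dag}[S]$, both established in the preceding discussion.

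The main obstacle is the explicit evaluation of $M(\Phi_{\widetilde{A}}) M(\Phi) M(\Phi_{\widetilde{B}})$. For a Hermitian operator $H = h_0 I + \sum_i h_i \sigma_i$, the Pauli identity $H \sigma_i H = 2 h_0 h_i I + (h_0^2 - \sum_k h_k^2)\sigma_i + 2 h_i \sum_k h_k \sigma_k$ gives $M(\Phi_H)$ in closed form. Substituting the coefficients of $\widetilde{A}$ and $\widetilde{B}$ and expanding produces a large polynomial; the simplification uses $\det S = 1 - x^2$, $\det \Phi^{\dag}[S] = y^2 - \xi^2$, and most crucially the relation $y t_j = (\lambda_j^2 - y) x_j$ from Eq.~\eqref{x} to eliminate the $t_j$ in favor of the $x_j$. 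The zeroth row and column of the product already reduce to $(1,0,0,0)$ by the verified unitality and trace preservation; in the remaining $3\times 3$ block one organizes the terms into the diagonal piece $\lambda_i \delta_{ij}$ and a rank-one correction of the form $x_i \lambda_j x_j$, arriving at Eq.~\eqref{matrix-unital-non-diagonal}. Tracking this split is the only really delicate step.

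The diagonalization step is a standard linear-algebra corollary. Any real $3 \times 3$ matrix admits a factorization $Q_{\widetilde{U}}\, {\rm diag}(\widetilde{\lambda}_1, \widetilde{\lambda}_2, \widetilde{\lambda}_3)\, Q_{\widetilde{V}}$ with $Q_{\widetilde{U}}, Q_{\widetilde{V}} \in SO(3)$: start from the ordinary singular-value decomposition and absorb at most two sign flips into the diagonal entries so that $\det Q_{\widetilde{U}} = \det Q_{\widetilde{V}} = +1$. Via the double cover $SU(2) \to SO(3)$, each such $Q$ lifts to a unitary whose conjugation implements the orthogonal rotation on the Pauli sector while fixing the identity. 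Absorbing $\widetilde{U}^{\dag}$ into $\widetilde{A}$ on the left and $\widetilde{V}^{\dag}$ into $\widetilde{B}$ on the right yields $\Upsilon = \Phi_{\widetilde{U}^{\dag}\widetilde{A}} \circ \Phi \circ \Phi_{\widetilde{B}\widetilde{V}^{\dag}}$ with $M(\Upsilon) = {\rm diag}(1,\widetilde{\lambda}_1,\widetilde{\lambda}_2,\widetilde{\lambda}_3)$. The new operators $A = \widetilde{U}^{\dag}\widetilde{A}$ and $B = \widetilde{B}\widetilde{V}^{\dag}$ are no longer Hermitian but remain nondegenerate, which is exactly what Proposition~\ref{proposition:nonunital} requires.
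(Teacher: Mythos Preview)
Your proposal is correct and follows precisely the route the paper takes. The paper does not give a formal proof of Proposition~\ref{proposition:decomposition}; it merely sets up the fixed-point construction $\widetilde{A}=\sqrt{S}$, $\widetilde{B}=(\Phi^{\dag}[S])^{-1/2}$ from Ref.~\cite{aubrun-2015}, derives Eqs.~\eqref{y}--\eqref{x}, and then states ``Calculating $\widetilde{A}$, $\widetilde{B}$ and simplifying unital map $\Phi_{\widetilde{A}} \circ \Phi \circ \Phi_{\widetilde{B}}$ as much as possible, we obtain the following result.'' Your spectral computation of $\sqrt{S}$ and $(\Phi^{\dag}[S])^{-1/2}$, the identification $\Phi^{\dag}[S]=yI+\sum_j\lambda_j x_j\sigma_j$ via $M(\Phi^{\dag})=M(\Phi)^{\top}$, the one-line verifications of unitality and trace preservation from the fixed-point identity, and the final $SO(3)$/$SU(2)$ diagonalization are exactly the omitted details behind that sentence.
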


Proposition~\ref{proposition:decomposition} allows one to reduce
any nonboundary qubit channel $\Phi$ to a unital map $\Upsilon$
with diagonal matrix representation.

The obtained result becomes particularly simple in the case $t_1 =
t_2 = 0$ because, in this case, Eq.~\eqref{y} is readily solved
and matrix~\eqref{matrix-unital-non-diagonal} is automatically
diagonal. Thus, no diagonalization~\eqref{diagonalization} is
needed, $A = \widetilde{A}$ and $B = \widetilde{B}$.

\begin{corollary}
\label{corollary-2} Suppose a nonboundary qubit channel $\Phi$
given by matrix representation~\eqref{matrix-nonunital} with $t_1
= t_2 = 0$. If
\begin{eqnarray}
\label{A-case} A &=&
\frac{2}{\sqrt{(1+t_3)^2-\lambda_3^2}+\sqrt{(1-t_3)^2-\lambda_3^2}}
\nonumber\\
&& \times \left(%
\begin{array}{cc}
  \sqrt{(1+|t_3|)^2-\lambda_3^2} & 0 \\
  0 & \sqrt{(1-|t_3|)^2-\lambda_3^2} \\
\end{array}%
\right), \\
\label{B-case} B &=& \left(%
\begin{array}{cc}
  \frac{1}{\sqrt{1+t_3 x_3 +|\lambda_3 x_3|}} & 0 \\
  0 & \frac{1}{\sqrt{1+t_3 x_3 -|\lambda_3 x_3|}} \\
\end{array}%
\right),\\
x_3 & = & - t_3
\frac{1-t_3^2+\lambda_3^2+\sqrt{[(1+t_3)^2-\lambda_3^2][(1-t_3)^2-\lambda_3^2]}}{1-t_3^2-\lambda_3^2+\sqrt{[(1+t_3)^2-\lambda_3^2][(1-t_3)^2-\lambda_3^2]}},
\nonumber
\end{eqnarray}

\noindent then $\Upsilon = \Phi_A \circ \Phi \circ \Phi_B$ is a
unital qubit channel with eigenvalues
\begin{eqnarray}
\label{lambda-tilde-1} \widetilde{\lambda}_1 & = &
\frac{2\lambda_1}{\sqrt{(1+\lambda_3)^2-t_3^2}+\sqrt{(1-\lambda_3)^2-t_3^2}},\\
\widetilde{\lambda}_2 & = &
\frac{2\lambda_2}{\sqrt{(1+\lambda_3)^2-t_3^2}+\sqrt{(1-\lambda_3)^2-t_3^2}},\\
\label{lambda-tilde-3} \widetilde{\lambda}_3 & = &
\frac{4\lambda_3}{\left(\sqrt{(1+\lambda_3)^2-t_3^2}+\sqrt{(1-\lambda_3)^2-t_3^2}\right)^2}.
\end{eqnarray}
\end{corollary}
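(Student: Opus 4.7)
The plan is to specialize Proposition~\ref{proposition:decomposition} to the case $t_1 = t_2 = 0$, in which the construction collapses to elementary form and no separate diagonalization step is needed.

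First I would solve equation~\eqref{y} explicitly. With $t_1 = t_2 = 0$ only a single term survives on its right-hand side, so \eqref{y} reduces to the quadratic
\begin{equation*}
y^2 - (1 + \lambda_3^2 - t_3^2)\,y + \lambda_3^2 = 0.
\end{equation*}
Its discriminant admits two equivalent factorizations, $((1-\lambda_3)^2 - t_3^2)((1+\lambda_3)^2 - t_3^2) = ((1-t_3)^2 - \lambda_3^2)((1+t_3)^2 - \lambda_3^2)$, which is the algebraic source of the two different radical families appearing in \eqref{A-case}--\eqref{lambda-tilde-3}. Taking the largest root $y$ (the physically relevant one for positivity of $S$, per the analysis preceding Proposition~\ref{proposition:decomposition}) and substituting into~\eqref{x} gives $x_1 = x_2 = 0$ together with $x_3 = y t_3/(\lambda_3^2 - y)$; rationalizing the latter via the quadratic reproduces the closed-form $x_3$ stated in the corollary.

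Next I would observe that once $x_1 = x_2 = 0$, the sums $\sum_j x_j \sigma_j$ and $\sum_j \lambda_j x_j \sigma_j$ inside $\widetilde{A}$ and $\widetilde{B}$ collapse to $x_3 \sigma_3$ and $\lambda_3 x_3 \sigma_3$, so both operators are automatically diagonal in the $\sigma_3$-eigenbasis, with $x = |x_3|$ and $\xi = |\lambda_3 x_3|$. Crucially, the rank-one correction proportional to $x_i \lambda_j x_j$ in~\eqref{matrix-unital-non-diagonal} is then supported only on the $(3,3)$ entry, so the $3\times 3$ block of $M(\Phi_{\widetilde A} \circ \Phi \circ \Phi_{\widetilde B})$ is already diagonal. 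The auxiliary diagonalization~\eqref{diagonalization} is therefore trivial, $Q_{\widetilde U} = Q_{\widetilde V} = I$, and consequently $A = \widetilde{A}$, $B = \widetilde{B}$ with no further work; the resulting channel $\Upsilon = \Phi_A \circ \Phi \circ \Phi_B$ is unital, positive, and trace preserving by the general Proposition~\ref{proposition:decomposition}.

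The remaining step is algebraic bookkeeping. Direct substitution yields $A = {\rm diag}(\sqrt{1+x_3}, \sqrt{1-x_3})$ and $B = {\rm diag}((y+\lambda_3 x_3)^{-1/2},(y-\lambda_3 x_3)^{-1/2})$ (the two diagonal entries of $B$ possibly swapped, depending on the sign of $\lambda_3 x_3$). Using the identity $y = 1 + t_3 x_3$ (immediate from the definition of $y$ with $t_1 = t_2 = 0$), the formula for $B$ already matches~\eqref{B-case}. For $A$, invoking the quadratic relation for $y$ converts $1 \pm x_3$ into positive multiples of $(1 \mp t_3)^2 - \lambda_3^2$, absorbing a common normalization factor that produces the overall prefactor $\tfrac{2}{\sqrt{(1+t_3)^2-\lambda_3^2}+\sqrt{(1-t_3)^2-\lambda_3^2}}$ in~\eqref{A-case}. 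Finally, the three eigenvalues $\widetilde{\lambda}_i$ are read off the diagonal entries of~\eqref{matrix-unital-non-diagonal}; simplifying $\sqrt{y^2-\xi^2}$ through the same quadratic delivers the compact radicals~\eqref{lambda-tilde-1}--\eqref{lambda-tilde-3}. The main obstacle I anticipate is entirely this last step: nothing conceptual is at stake, but one must repeatedly apply the quadratic relation for $y$, carefully track the signs of $t_3$ and $\lambda_3 x_3$ (which is precisely why absolute values appear in~\eqref{A-case}--\eqref{B-case}), and choose between the two equivalent factorizations of the discriminant according to which radical is desired in the target expression. Checking that the output channel is trace preserving, i.e.\ that $A$ and $B$ satisfy a simple scalar relation, serves as a useful consistency check throughout.
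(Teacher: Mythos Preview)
Your proposal is correct and follows exactly the route the paper indicates: the paper's own justification for Corollary~\ref{corollary-2} is the single remark preceding it, namely that for $t_1=t_2=0$ Eq.~\eqref{y} is readily solved, matrix~\eqref{matrix-unital-non-diagonal} is automatically diagonal, and hence no diagonalization~\eqref{diagonalization} is needed, so $A=\widetilde A$, $B=\widetilde B$. You have simply unpacked that remark in detail, including the useful observation that the discriminant of the reduced quadratic admits the two factorizations that explain the two radical families in the statement.
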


\subsection{Generalized amplitude damping processes at finite temperature}
\label{section:nonunital-gad}

A two-level system with energy level separation $\Delta E$ is
coupled with a reservoir of finite temperature $T$, which results
in a generalized amplitude damping process
\begin{eqnarray}
\frac{d\varrho}{dt} &=& \gamma w \left( 2 \sigma_{+} \varrho
\sigma_{-} - \left\{ \sigma_{-}\sigma_{+} , \varrho
\right\} \right) \nonumber\\
&& + \gamma (1-w) \left( 2 \sigma_{-} \varrho \sigma_{+} - \left\{
\sigma_{+}\sigma_{-} , \varrho \right\} \right),
\end{eqnarray}

\noindent where $w,1-w$ are the populations of ground and excited
levels in thermal equilibrium, i.e., $\frac{1-w}{w} =
\exp(-\frac{\Delta E}{kT})$. The resulting dynamical map $\Phi(t)$
is nonunital, and its matrix representation is
\begin{equation}
M(\Phi(t)) = \left(%
\begin{array}{cccc}
  1 & 0 & 0 & 0 \\
  0 & e^{-\gamma t} & 0 & 0 \\
  0 & 0 & e^{-\gamma t} & 0 \\
  (2w-1)(1-e^{-2\gamma t}) & 0 & 0 & e^{- 2 \gamma t} \\
\end{array}%
\right).
\end{equation}

Using Corollary~\ref{corollary-2}, we find the corresponding
unital dynamical map $\Upsilon(t)$ with eigenvalues
\begin{eqnarray}
&& \label{eigenvalues-reduced-gad} \widetilde{\lambda}_1(t) =
\widetilde{\lambda}_2(t)  =
 e^{-\gamma t}  \Big\{ \sqrt{w(1-w)} (1-e^{-2\gamma t}) \nonumber\\
 && \qquad \quad + \sqrt{[1-w(1-e^{-2\gamma t})][w+e^{-2\gamma t}(1-w)]} \Big\}^{-1} \qquad
\end{eqnarray}

\noindent and $\widetilde{\lambda}_3(t) =
\widetilde{\lambda}_1^2(t) = \widetilde{\lambda}_2^2(t)$. The
latter relation means that $\Upsilon(t)$ is nothing else but an
amplitude damping process with the infinite temperature of the
environment considered in Examples~\ref{example-1} and
\ref{example-2}, although the generator of $\Upsilon(t)$ is
time-dependent due to a time deformation. Exploiting
Eq.~\eqref{B-case}, we also find
\begin{eqnarray}
\label{B-gad}
B(t) & \propto & \sqrt[4]{(1-w)[1 - (1-w)(1-e^{-2 \gamma t})]} \, \sigma_+ \sigma_- \nonumber\\
&& + \sqrt[4]{w[1 - w(1-e^{-2 \gamma t})]} \, \sigma_- \sigma_+.
\end{eqnarray}

\begin{example}
\label{example-3} Suppose two identical qubits, each experiencing
amplitude damping in a reservoir with a finite temperature $T$
such that $w,1-w$ are the populations of ground and excited levels
in thermal equilibrium (the case of two memory
qubits~\cite{fonseca-romero-2012}). What is the optimal
preparation of the initial entangled state, whose entanglement
lifetime is the longest? Surprisingly, it is not the maximally
entangled state. Using Proposition~\ref{proposition:nonunital} and
Eq.~\eqref{B-gad}, we conclude that ultimate robustness exhibits
the state
\begin{eqnarray}
&& \label{ultimately-robust-gad} \ket{\psi_{\Phi \otimes \Phi}} = \sqrt{\frac{(1-w)[1 - (1-w)(1-e^{-2 \gamma \widetilde{\tau}})]}{1-(1-2w+2w^2)(1-e^{-2 \gamma \widetilde{\tau}})}} \, \ket{0} \otimes \ket{1} \nonumber\\
&& + \sqrt{\frac{w[1 - w(1-e^{-2 \gamma
\widetilde{\tau}})]}{1-(1-2w+2w^2)(1-e^{-2 \gamma
\widetilde{\tau}})}} \, \ket{1} \otimes \ket{0},
\end{eqnarray}

\noindent where $\widetilde{\tau}$ is the maximal entanglement
lifetime under unital noise $\Upsilon(t) \otimes \Upsilon(t)$.
Using Corollary~\ref{corollary} and the explicit form of
eigenvalues \eqref{eigenvalues-reduced-gad}, we get
\begin{equation}
\widetilde{\tau} = \frac{1}{2\gamma} \ln
\frac{4(\sqrt{2}+1)w(1-w)}{1 \! + \! 4(\sqrt{2} \! + \! 1) w (1 \!
- \! w) \! - \!\sqrt{1 \! + \! 8 (\sqrt{2} \! + \! 1) w (1 \! - \!
w)}},
\end{equation}

\noindent which is much greater than the entanglement lifetime of
the maximally entangled state $\ket{\psi_+}$,
\begin{equation}
\tau_{\psi_+} = \frac{1}{2\gamma} \ln
\frac{1+\sqrt{2w(1-w)}}{\sqrt{2w(1-w)}}.
\end{equation}

\begin{figure}
\includegraphics[width=8.5cm]{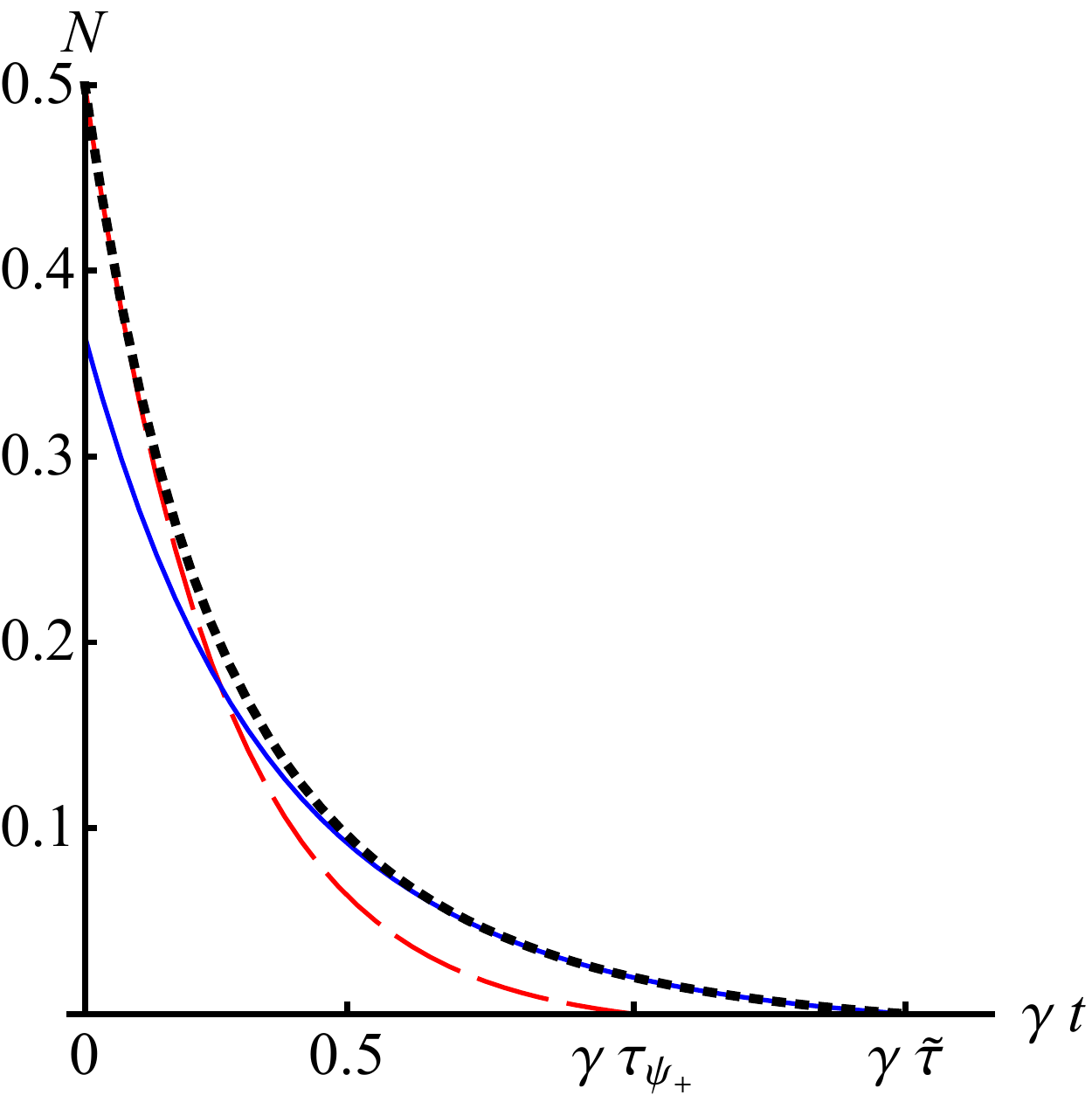}
\caption{\label{figure3} (Color online) Evolution of negativity
under local generalized amplitude damping noise $\Phi(t) \otimes
\Phi(t)$ with $w=0.01$ for the following initial states: the
maximally entangled state (red dashed curve), the ultimately
robust state (blue solid curve). $\gamma t$ is dimensionless time.
The dotted line represents a collection of negativities for states
$\Phi(t) \otimes \Phi(t)[\ket{\psi_{t}}\bra{\psi_{t}}]$, where the
interpolating initial state $\ket{\psi_{t}}$ is given by
Eq.~\eqref{interpolation-robust-gad}.}
\end{figure}

If $w \rightarrow 0$, then $\widetilde{\tau} / \tau_{\psi_+}
\rightarrow 2$, i.e., the use of the ultimately robust state
allows one to prolong the entanglement lifetime twice as compared
with the entanglement lifetime of the maximally entangled state. A
comparison of entanglement dynamics for initial states
$\ket{\psi_{\Phi \otimes \Phi}}$ and $\ket{\psi_+}$ is depicted in
Fig.~\ref{figure3}. We use negativity $N(\varrho) = \frac{1}{2}
\left( \|\varrho^{\Gamma}\|_1-1 \right)$ as the entanglement
measure of the state $\varrho$~\cite{zyczkowski-1998,vidal-2002}
($\varrho^{\Gamma}$ is the patial transpose of $\varrho$ with
respect to one of the qubits).

Finally, $\Phi(t)\otimes\Phi(t)$ is entanglement annihilating if
and only if
\begin{equation}
1-e^{-2\gamma t} \geqslant
\frac{\sqrt{1+8(\sqrt{2}+1)w(1-w)}-1}{4(\sqrt{2}+1)w(1-w)}.
\end{equation}

\noindent This result solves the problem of characterizing
entanglement annihilation by generalized amplitude damping noises
raised in Ref.~\cite{filippov-rybar-ziman-2012}.
\end{example}

Although the state~\eqref{ultimately-robust-gad} is less entangled
initially, it remains entangled longer than the maximally
entangled state $\ket{\psi_+}$, whose entanglement is greater in
the beginning of evolution; see Fig.~\ref{figure3}. Thus, the
state~\eqref{ultimately-robust-gad} is optimal for preserving
entanglement as long as possible, whereas the maximally entangled
state $\ket{\psi_+}$ is optimal for a short storage of
entanglement. In practice, however, one may be interested in
storing entanglement for some intermediate time $t_0$. An
interpolation between $\ket{\psi_+}$ and the
state~\eqref{ultimately-robust-gad} is the normalized state
\begin{eqnarray}
\label{interpolation-robust-gad} \ket{\psi_{t_0}} & \!\! \propto \!\! & \left\{ (1 \! - \! w)[1 - (1 \! - \! w)(1 \! - \! e^{-2 \gamma \widetilde{\tau}})] \right\}^{\frac{t_0}{2\widetilde{\tau}}} \ket{0} \otimes \ket{1} \nonumber\\
&& + \left\{ w[1 - w(1-e^{-2 \gamma \widetilde{\tau}})]
\right\}^{\frac{t_0}{2\widetilde{\tau}}} \ket{1} \otimes \ket{0}.
\end{eqnarray}

\noindent  One can see that the state
$\Phi(t)\otimes\Phi(t)[\ket{\psi_{t_0}}\bra{\psi_{t_0}}]$ has a
high degree of entanglement at time moment $t_0$, which is
illustrated by negativity in Fig.~\ref{figure3}. Thus, using the
state \eqref{interpolation-robust-gad} as the initial state, one
is able to reach a high degree of entanglement at time $t_0$.

In general, if a large degree of entanglement is desired at time
$t_0$, then the interpolation for the optimally prepared state is
a modification of Eq.~\eqref{ultimate-nonunital},
\begin{equation}
\label{general-robust} \ket{\psi_{\Phi\otimes\Phi'}(t_0)} \propto
[B(\widetilde{\tau}) \otimes
B'(\widetilde{\tau})]^{t_0/\widetilde{\tau}}
\ket{\psi_{\Upsilon\otimes\Upsilon'}}.
\end{equation}

The state \eqref{general-robust} always differs from the maximally
entangled state $\ket{\psi_+}$ if at least one of the noises
$\Phi(t)$ and $\Phi'(t)$ is nonunital and $t_0 > 0$.

\begin{example}
\label{example-4} Suppose a pair of entangled qubits, with the
first qubit experiencing generalized amplitude damping in a memory
cell (parameters $w$, $\gamma$) and the second (itinerant) qubit
being affected by a depolarizing noise with rate $\gamma'$.
Suppose it takes time $t_0$ for the second qubit to reach another
laboratory, after which an experiment with two apart qubits is
performed. Maximal entanglement lifetime $\widetilde{\tau}$ is a
solution of equation $(1+\widetilde{\lambda}_1(t))^2 = 1 +
e^{\gamma' t}$, where $\widetilde{\lambda}_1(t)$ is given by
Eq.~\eqref{eigenvalues-reduced-gad}. Since operator $B$ is defined
by Eq.~\eqref{B-gad} and operator $B'=I$ in this case, then the
optimal initial state guaranteeing a high degree of final
entanglement for $t_0 \in [0,\widetilde{\tau})$ is
\begin{eqnarray}
\ket{\psi_{t_0}} & \!\! \propto \!\! & \left\{ (1 \! - \! w)[1 - (1 \! - \! w)(1 \! - \! e^{-2 \gamma \widetilde{\tau}})] \right\}^{\frac{t_0}{4\widetilde{\tau}}} \ket{0} \otimes \ket{1} \nonumber\\
&& + \left\{ w[1 - w(1-e^{-2 \gamma \widetilde{\tau}})]
\right\}^{\frac{t_0}{4\widetilde{\tau}}} \ket{1} \otimes \ket{0}
\end{eqnarray}

\noindent Note that this state is different from the
state~\eqref{interpolation-robust-gad}.
\end{example}

\section{Conclusions}
\label{section:conclusions}

We have analyzed entanglement dynamics of two-qubit entangled
states subjected to local qubit noises of the most general form.

If the noise is unital, then the ultimately robust state to
entanglement loss is maximally entangled. We have found a
criterion (Proposition~\ref{proposition:unital}), which allows one
to find the maximal entanglement lifetime in this case.

If the noise is nonunital, then we have reduced this problem to
the previous one by developing a decomposition technique suggested
in Ref.~\cite{aubrun-2015}. Hereby, we have solved the problem of
full characterization of local two-qubit entanglement annihilating
channels raised in Ref.~\cite{filippov-rybar-ziman-2012}.
Moreover, explicit decomposition of nonunital qubit maps
\eqref{decomposition} can find further applications in the
analysis of $n$-tensor stable positive
maps~\cite{muller-hermes-2016,filippov-magadov-2017}, absolutely
separating quantum maps~\cite{filippov-magadov-jivulescu-2017},
and evaluation of channel capacities.

The ultimately robust state turns out to differ from the maximally
entangled one for nonunital noises. By examples of generalized
amplitude damping noises, we show that the ultimately robust state
remains entangled about twice as long as compared with the
maximally entangled one if environment temperature tends to zero.
This fact shows that the use of an ultimately robust entangled
state is beneficial for entanglement preservation. The
communication length for entanglement based protocols can be
significantly increased by using optimal state preparation.
Similarly, disentanglement time in physically implementable
systems, e.g., electron spins, could be increased as compared to
the disentanglement time for maximally entangled initial
states~\cite{mazurek-2014,bragar-2015}.

Finally, we construct an interpolation initial state, which has a
high degree of entanglement for a particular time moment $t$. This
state is close to the maximally entangled state if $t$ tends to
zero and to the ultimately robust state if $t$ approaches the
maximal entanglement lifetime.

\begin{acknowledgments}
We thank David Reeb for bringing
Refs.~\cite{aubrun-2015,gurvits-2004} to our attention. We thank
the anonymous referee for very valuable comments to improve the
quality of the manuscript. The study is supported by the Russian
Science Foundation under project No. 16-11-00084 and performed at
the Moscow Institute of Physics and Technology.
\end{acknowledgments}

\end{document}